\let\c@author\relax
\newcommand{\DeclareMathActive}[2]{%
  \expandafter\edef\csname keep@#1@code\endcsname{\mathchar\the\mathcode`#1 }
  \begingroup\lccode`~=`#1\relax
  \lowercase{\endgroup\def~}{#2}%
  \AtBeginDocument{\mathcode`#1="8000}%
}
\newcommand{\std}[1]{\csname keep@#1@code\endcsname}
\patchcmd{\newmcodes@}{\mathcode`\-\relax}{\std@minuscode\relax}{}{\ddt}
\def\?#1{}
\def\whp{w.h.p\@ifnextchar.{.\?}{\@ifnextchar,{.}{\@ifnextchar){.}{\@ifnextchar:{.:\?}{.\ }}}}}
\def\Whp{W.h.p\@ifnextchar.{.\?}{\@ifnextchar,{.}{.\ }}}
\renewcommand\MoveEqLeft[1][1.5]{\kern #1em  &   \kern -#1em}
\newcommand{\bc}[1]{\left( #1 \right)}
\newcommand{\brk}[1]{\left[ #1 \right]}
\newcommand{\cbc}[1]{\left\lbrace #1 \right\rbrace}
\newcommand{\Erw}{\mathbb{E}}
\renewcommand\P[2][]{P_{#2}^{\text{\normalfont#1}}}
\def\Pl{\P[]}
\def\Pn{\P[]}
\def\Bin{\mathrm{Bin}}
\let\eps\varepsilon
\def\protocol#1{\textsc{#1}\xspace}
\author{Petra Berenbrink}{Universität Hamburg, Germany}{petra.berenbrink@uni-hamburg.de}{}{DFG FOR 2975.}
\author{Amin Coja-Oghlan}{TU Dortmund University, Germany}{amin.coja-oghlan@tu-dortmund.de}{}{DFG FOR 2975.}
\author{Oliver Gebhard}{TU Dortmund University, Germany}{oliver.gebhard@tu-dortmund.de}{}{DFG CO 646/3.}
\author{Max Hahn-Klimroth}{TU Dortmund University, Germany}{maximilian.hahnklimroth@tu-dortmund.de}{0000-0002-3995-419X}{DFG FOR 2975.}
\author{Dominik Kaaser}{TU Hamburg, Germany}{dominik.kaaser@tuhh.de}{0000-0002-2083-7145}{}
\author{Malin Rau}{Universität Hamburg, Germany}{malin.rau@uni-hamburg.de}{}{DFG FOR 2975.}
\authorrunning{P.\ Berenbrink, A.\ Coja-Oghlan, O.\ Gebhard, M.\ Hahn-Klimroth, D.\ Kaaser, and M.\ Rau} 
\keywords{Consensus, Majority, Stochastic Dominance, Population Protocols, Gossip Model, Strassen's Theorem} 
\title{On the Hierarchy of Distributed Majority Protocols}
\begin{document}

\maketitle

\begin{abstract}
We study the \emph{Consensus} problem among $n$ agents, defined as follows.
Initially, each agent holds one of two possible opinions.
The goal is to reach a consensus configuration in which every agent shares the same opinion.
To this end, agents randomly sample other agents and update their opinion according to a simple update function depending on the sampled opinions.

We consider two communication models: the \emph{gossip model} and a variant of the \emph{population model}.
In the gossip model, agents are activated in parallel, synchronous rounds.
In the population model, one agent is activated after the other in a sequence of discrete time steps.
For both models we analyze the following natural family of majority processes called $j$-Majority: when activated, every agent samples $j$ other agents uniformly at random (with replacement) and adopts the majority opinion among the sample (breaking ties uniformly at random).
As our main result we show a hierarchy among majority protocols: $(j+1)$-Majority (for $j > 1$) converges \emph{stochastically faster} than $j$-Majority for any initial opinion configuration.
In our analysis we use Strassen's Theorem to prove the existence of a coupling.
This gives an affirmative answer for the case of two opinions to an open question asked by Berenbrink et al.\ [2017].
\end{abstract}

\section{Introduction}
In this paper we consider consensus protocols in a distributed system consisting of $n$ identical, anonymous agents. Initially every agent has one of $k$  opinions  and 
the goal is that all agents agree on the same opinion.
Reaching \emph{consensus} is a fundamental task in distributed computing with a multitude of applications including fault tolerance in distributed sensor array, clock synchronization, control of autonomous robots, or blockchains.
In computational sciences, consensus protocols model, e.g., dynamic particle systems or biological processes.
In social sciences, consensus protocols have been studied in the context of opinion formation processes among social interaction systems.
See~\cite{DBLP:journals/sigact/BecchettiCN20} for a quite recent survey including references and further applications.

We study the simple and well-known class of \emph{$j$-Majority} protocols \cite{DBLP:journals/dc/BecchettiCNPST17, DBLP:conf/podc/GhaffariL18, DBLP:conf/podc/BerenbrinkCEKMN17} in the gossip model \cite{DBLP:conf/stoc/Censor-HillelHKM12, DBLP:conf/soda/BecchettiCNPS15, DBLP:journals/sigact/BecchettiCN20} where all agents are activated in parallel rounds. In turn, every agent $u$ considers the opinions of $j$ agents $v_1, \dots, v_j$ sampled uniformly at random (with replacement).
It then adopts the majority opinion among the sampled opinions, breaking ties uniformly at random. We are interested in the time it takes until the protocol \emph{converges} to such a consensus configuration. 
Setting $j = 1$ yields the so-called \protocol{Voter} process \cite{DBLP:conf/icalp/BerenbrinkGKM16}. A variant of $2$-Majority with lazy tie-breaking is known as \protocol{two-sample voting} \cite{DBLP:conf/icalp/CooperER14} or the \protocol{TwoChoices} process \cite{DBLP:conf/podc/GhaffariL18}, and the $3$-Majority dynamics is analyzed in \cite{DBLP:journals/dc/BecchettiCNPST17}.

The main idea of majority process with $j>1$ is to speed up the convergence time. For the \protocol{voter} process the convergence time is linear in $n$
\cite{DBLP:conf/icalp/BerenbrinkGKM16} (independent of the number of initial opinions),  whereas the convergence time of $3$-Majority is $O(k \log n)$ for $k$ $(= o(n))$ initial opinions \cite{DBLP:conf/podc/GhaffariL18}.
In  \textcite{DBLP:conf/podc/BerenbrinkCEKMN17} the authors compare the \protocol{TwoChoices} process to $3$-Majority.
They show a stochastic dominance of the convergence time of $3$-Majority over the convergence time of \protocol{Voter} and \protocol{TwoChoices}, assuming $k$ initial opinions.
For $j$-Majority, they conjecture a \emph{hierarchy} of protocols (see Conjecture 6.1 in \cite{DBLP:conf/podc/BerenbrinkCEKMN17}). In particular, they ask whether one can couple $j$-Majority and $(j+1)$-Majority for $j \in \mathbb{N}$ such that $(j + 1)$-Majority is stochastically faster than $j$-Majority.

\medskip

In this paper, we settle the matter for the case of $k = 2$ opinions and prove the existence of such a hierarchy of majority protocols. 
Intuitively, this establishes that the processes converge \emph{faster} (or at least \emph{equally fast}) for larger values of $j$.
Let $T_j$ be the random variable for the convergence time of $j$-Majority.
We formally prove that $T_{j+1}$ stochastically minorizes $T_{j}$, written $T_{j+1} \preceq T_{j}$, assuming both processes start in the same configuration.
Formally, we show that $\Prob{T_{j+1} \geq t} \leq \Prob{T_{j} \geq t}$ for any $t \in \N$. Our main technical contribution is the formal proof of this stochastic dominance. 
Our proof has its foundations in quite natural observations regarding the transition properties of the $j$-Majority processes. Similar results for individual steps of the process have been shown, e.g., in \cite{DBLP:journals/dc/FraigniaudN19}.
However, formally proving and maintaining the stochastic dominance over all possible configurations requires a lot of care, and to the best of our knowledge, our result is the first proof of stochastic dominance that covers the entire execution of $j$-Majority for all $j \in \mathbb{N}$ in the setting with two opinions.
To motivate the obstacles we have to overcome, observe that the process is influenced by opposing "forces".
Specifically, an agent from the minority opinion must be selected to interact with more than $j/2$ agents from the majority opinion.
The former becomes less likely with increasing majority, while the latter becomes more likely with increasing majority.
In the proof we carefully show that these forces balance out in a favorable manner.

In addition, we also prove the stochastic dominance in a \emph{sequential model} where one agent after the other is randomly activated. Note that this sequential model is a variant of the prominent \emph{population model} \cite{DBLP:journals/dc/AngluinADFP06}, where in each time step a \emph{pair} of agents interact.
Finally, we show for $3$-Majority in the sequential model an asymptotically optimal bound on the convergence time of $O(n \log n)$ activations. In \cite{DBLP:conf/podc/GhaffariL18} the authors show the same result for $3$-Majority in the gossip model.
Our theoretical findings are complemented by empirical results. We simulate $j$-Majority processes for various values of $j$ and large numbers of agents ranging from $n = 10^2$ to $n = 10^8$.

\subsection{Related Work}

\paragraph{Consensus in the Gossip Model}
A simple and natural consensus process is the so-called \protocol{Voter} process \cite{DBLP:journals/iandc/HassinP01, DBLP:journals/networks/NakataIY00, DBLP:conf/podc/CooperEOR12, DBLP:conf/icalp/BerenbrinkGKM16, DBLP:conf/soda/KanadeMS19}
where every agent adopts the opinion of a single, randomly chosen agent in each round.
The expected convergence time of \protocol{Voter} is at least linear \cite{DBLP:conf/icalp/BerenbrinkGKM16}.
In order to speed up the process, two related protocols have been proposed, namely the \protocol{TwoChoices} process \cite{DBLP:conf/podc/ElsasserFKMT17,DBLP:conf/icalp/CooperER14,DBLP:conf/wdag/CooperERRS15,DBLP:conf/wdag/CooperRRS17} and the \protocol{3-Majority} dynamics \cite{DBLP:journals/dc/BecchettiCNPST17, DBLP:conf/podc/GhaffariL18, DBLP:conf/podc/BerenbrinkCEKMN17}.
In both processes, each agent $u$ takes three opinions and updates its opinion to the majority among the sample.
In the \protocol{TwoChoices} process, $u$ takes its own opinion and samples two opinions u.a.r. Ties are broken towards $u$'s own opinion.
In the \protocol{3-Majority} dynamics, $u$ samples three opinions u.a.r.\ breaking ties randomly.
In \cite{DBLP:conf/podc/GhaffariL18} the authors consider arbitrary initial configurations in the gossip model.
They show that \protocol{TwoChoices} with $k = O(\sqrt{n / \log n})$ and \protocol{3-Majority} with $k = O({n^{1/3} / {\log n}})$ reach consensus in $O({k\cdot \log n})$ rounds, improving a result by \textcite{DBLP:journals/dc/BecchettiCNPST17}.
For arbitrary $k$, they show that \protocol{3-Majority} reaches consensus in $O(n^{2/3} \log^{3/2} n)$ rounds \whp, improving a result by \textcite{DBLP:conf/podc/BerenbrinkCEKMN17}.

\Textcite{DBLP:conf/soda/SchoenebeckY18} consider a generalization of multi-sample consensus protocols on complete and Erdős-Rényi graphs for two opinions.
Their probabilistic model covers various consensus processes, including $j$-Majority, by using a so-called \emph{update rule}, a function $f \colon [0,1] \rightarrow [0,1]$.
In each round, every agent $u$ adopts opinion $a$ with probability $f(\alpha(u))$ for some function $f$, where $\alpha(u)$ is the fraction of neighbors of agent $u$ that have opinion $a$.
Depending on certain natural properties on $f$, they analyze the convergence time for complete graphs and Erdős-Rényi graphs.

Another related process is the \protocol{MedianRule}~\cite{DBLP:conf/spaa/DoerrGMSS11}
where in each round every agent adopts the median of its own opinion and two sampled opinions, assuming a total order among opinions.
It reaches consensus in $O({\log k \log\log n + \log n})$ rounds \whp.
For two opinions the \protocol{MedianRule} is equivalent to the \protocol{TwoChoices} process, and their analysis is tight.
For the case of $k > 2$ opinions we remark that assuming a total order among the opinions is a strong assumption that is not required by any of the other protocols.

Finally, considerate amount of work has been spent on analyzing the so-called undecided state dynamics introduced by \textcite{DBLP:journals/dc/AngluinAE08}.
The basic idea is that whenever two agents with different opinions interact, they lose their opinions and become \emph{undecided}, and undecided agents adopt the first opinion they encounter.
\Textcite{DBLP:conf/mfcs/ClementiGGNPS18} study the undecided state dynamics in the gossip model.
They consider two opinions and show that the protocol reaches consensus in $O({\log n})$ rounds \whp.
If there is a so-called \emph{bias} of order $\Omega(\sqrt{n \log n})$, the initial plurality opinion prevails.
The (additive) bias is the difference between the numbers of agents holding either opinion.
\Textcite{DBLP:conf/soda/BecchettiCNPS15} analyze the undecided state dynamics for $k = O{{(n/\log n)}^{1/3}}$ opinions and show a convergence time of $O({k \cdot \log n})$ rounds \whp.
\Textcite{DBLP:conf/podc/GhaffariP16a, DBLP:conf/icalp/BerenbrinkFGK16, DBLP:journals/corr/abs-2103-10366} consider a synchronized variant that runs in phases of length $\Theta({\log k})$.
Agents can become undecided only at the start of such a phase and use the rest of the phase to obtain a new opinion.
These synchronized protocols achieve consensus in $O({\log^2 n})$ rounds \whp and can be further refined using more sophisticated synchronization mechanisms.

\paragraph{Majority and Consensus in the Population Model}
In \emph{exact} majority the goal is to identify the majority among two possible opinions, even if the bias is as small as only one~\cite{DBLP:journals/siamco/DraiefV12, DBLP:conf/icalp/MertziosNRS14, DBLP:conf/podc/AlistarhGV15, DBLP:conf/nca/MocquardAABS15, DBLP:journals/dc/DotyS18, DBLP:conf/soda/AlistarhAEGR17, DBLP:conf/soda/AlistarhAG18, DBLP:conf/podc/BilkeCER17, DBLP:conf/podc/KosowskiU18, DBLP:conf/wdag/BerenbrinkEFKKR18, DBLP:journals/dc/BerenbrinkEFKKR21, DBLP:conf/podc/NunKKP20, DBLP:journals/corr/abs-2106-10201}.
The best known protocol by \textcite{DBLP:journals/corr/abs-2106-10201} solves exact majority with $O(\log n)$ states and $O(\log n)$ parallel time, both in expectation and \whp.
This is optimal: it takes at least $\Omega({n \log n})$ interactions until each agent interacts at least once, and any majority protocol which stabilizes in expected ${n^{1-\Omega({1})}}$ parallel time requires at least $\Omega({\log{n}})$ states (under some natural conditions, see \cite{DBLP:conf/soda/AlistarhAG18}).

\emph{Approximate} majority is easier: a simple 3-state protocol \cite{DBLP:journals/dc/AngluinAE08, DBLP:journals/nc/CondonHKM20} reaches consensus \whp in $O({\log n})$ parallel time and correctly identifies the initial majority \whp if an initial bias of order $\Omega{(\sqrt{n \log n})}$ is present.
\Textcite{DBLP:journals/nc/CondonHKM20} also consider a variant of the 3-Majority process in (a variant of) the gossip model where three randomly chosen agents interact.
They show a parallel convergence time of $O(k \log n)$ \whp, provided a sufficiently large initial bias is present.
Furthermore, \textcite{DBLP:conf/podc/KosowskiU18} mention a protocol which determines the exact majority in $O({\log^2 n})$ parallel time \whp using only constantly many states.

Less is known about population protocols that solve consensus among more than two opinions.
One line of research considers only the required number of states to eventually identify the opinion with the largest initial support correctly.
For this problem, \textcite{DBLP:conf/ciac/NataleR19} show a lower bound of $\Omega(k^2)$ states via an indistinguishability argument.
The currently best known protocol uses $O(k^6)$ states if there is an order among the opinions and $O({k^{11}})$ states otherwise~\cite{DBLP:conf/opodis/GasieniecHMSS16}.
Sacrificing the strong guarantees of always-correct exact plurality consensus, \textcite{DBLP:journals/corr/abs-2103-10366} achieve \emph{approximate consensus} in $O(\log^2 n)$ parallel time \whp using only $O(k \log n)$ states.
If there is an initial bias of order $\Omega({\sqrt{n \log n}})$, the initial plurality opinion wins \whp.
In \cite{DBLP:conf/podc/BankhamerEKK20} another variant of the population model is considered where agents are activated by random clocks.
At each clock tick, every agent may open communication channels to constantly many other agents chosen uniformly at random or from a list of at most constantly many agents contacted in previous steps.
In this model, opening communication channels is subject to a random delay.
The authors show that consensus is reached by all but a $1/\poly\log n$ fraction of agents in $O({\log\log_\alpha k \log k + \log\log n})$ parallel time \whp, provided a sufficiently large bias is present.

\subsection{Models and Results}

\paragraph{Gossip Model}
In the gossip model \cite{DBLP:conf/stoc/Censor-HillelHKM12, DBLP:conf/soda/BecchettiCNPS15, DBLP:journals/sigact/BecchettiCN20} all agents are activated simultaneously in synchronous rounds.
In each round every agent $u$ opens a communication channel to $j$ agents $v_1, \dots, v_j$ chosen independently and uniformly at random with replacement.
 (For simplicity we also allow that $v_i = u$ and assume that the $v_i$ are sampled with replacement.)
The running time (or \emph{convergence time}) of a majority protocol is measured in the numbers of rounds until all agents agree on the same opinion.

\paragraph{Sequential Model}
The population model was introduced by \textcite{DBLP:journals/dc/AngluinADFP06} to model systems of resource limited mobile agents that perform a computation via a sequence of pairwise interactions.
We consider a variant where in each time step one agent $u$ is chosen uniformly at random to interact with $j$ randomly sampled agents $v_1, \dots, v_j$. (As before, we do not rule out that $u = v_i$ for some $i$).
When $u$ is activated it updates its opinion according to the random sample.
The running time is measured in the number of interactions.
To allow for a comparison with the (inherently) parallel gossip model, the so-called \emph{parallel time} is defined as the number of interactions divided by the number of agents $n$.
Note that our processes do not \emph{halt}: agents do not know that consensus has been achieved (see also the impossibility result in \cite{DBLP:conf/podc/DotyE19}).

\paragraph{\textit{j}-Majority Processes}
In the following we use $P_j$ to denote the $j$-Majority process.
When executing process $P_j$, the system transitions through a sequence of \emph{configurations} $\left(C_t\right)_{t \in \N_0}$.
At time $t \in \mathbb{N}_0$ the \emph{configuration} $C_t \in \set{a,b}^n$ assigns each agent an opinion in $\set{a,b}$.
In our analysis we are interested in the number of agents with majority opinion. 
We will always assume w.l.o.g.\ that $a$ is the majority opinion and we denote a \emph{state} $X_t$ as the number of agents with majority opinion in configuration $C_t$.
The configuration $C_0$ at time $0$ is called the \emph{initial configuration} and the corresponding state $X_0$ is called the \emph{initial state}.
The \emph{convergence time} $T_j(C_0)$ is defined as the first time where all agents have the same opinion when starting process $P_j$ in initial configuration $C_0$.
Note that the convergence time in the complete graph only depends on the number of agents with majority opinion since two nodes with the same opinion are not distinguishable. 
Hence we write $T_j(X_0)$ in the following.
Formally, $T_j(X_0)$ is a stopping time defined as $T_j(X_0) = \min\set{t \in \mathbb{N}_0 | X_t =n}$.

Whenever an agent $u$ is activated in process $P_j$, it samples a set of $j$ agents uniformly at random and updates its opinion according to the following rules.
\begin{itemize}[nosep]
\item In process $P_{2j}$, the agent $u$ samples $2j$ agents from $V$ with replacement.
It then adopts the majority opinion among the sample, breaking ties uniformly at random.
\item In process $P_{2j+1}$, the agent $u$ samples $2j+1$ agents with replacement.
It then adopts the majority opinion among the sample.
\end{itemize}
Note that tie-breaking is not required in process $P_{2j+1}$: we are guaranteed to have a clear majority since we have $k = 2$ opinions.

\paragraph{Stochastic Dominance}
Before we formally present our result, it remains to define stochastic dominance.
\begin{definition*}[Stochastic Dominance]
Let $\cE$ be a Polish space endowed with a partial ordering $\leq_{\cE}$. Let $\mu, \nu \in \cP (\cE)$ be probability measures on $\cE$. If, for every $x \in \cE$, we have
\begin{align*}
    \mu \bc{ \cbc{ y \in \cE: y \geq_{\cE} x } } \geq \nu \bc{ \cbc{ y \in \cE: y \geq_{\cE} x } },
\end{align*}
we say that $\mu$ stochastically dominates $\nu$. In this case we also say that $\mu$ \emph{majorizes} $\nu$ (written as $\nu \succeq \nu$) or $\nu$ \emph{minorizes} $\mu$ ($\nu \preceq \nu$).
\end{definition*}
We now formally state our main result which applies for both communication models, the gossip model and the sequential model. 

\medskip

\noindent\hspace{1.5em}\colorbox{lightgray!25}{\parbox{\textwidth-3em}{
\begin{restatable}[Main Result]{theorem}{MainTheorem}
\label{thm:main-result}
Let $  T_j(X_{0})$ be the convergence time of process $P_{j}$ with initial state $X_0$ in either the gossip model or the sequential model.
Then
\begin{align*}
      T_{j+1}(X_{0}) \preceq   T_{j}(X_{0}) \quad \text{for any } j > 1.
\end{align*}
Furthermore, for all $j > 1$,
\begin{align*}
    \Ex{  T_{2j+2}(X_{0}) } = \Ex{  T_{2j+1}(X_{0}) } \leq \Ex{  T_{2j}(X_{0}) }.
\end{align*}\end{restatable}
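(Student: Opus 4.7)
The plan is to prove the theorem in three stages: a single-step comparison of the per-agent transition probabilities, a full-trajectory coupling constructed via Strassen's Theorem, and a direct distributional identity handling the equality of expected convergence times.

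\textbf{Stage one (per-agent transition analysis).} Let $g_j(p)$ denote the probability that a newly activated agent adopts opinion $a$ under $P_j$, given that a fraction $p$ of the population currently holds opinion $a$. By conditioning on the last of $2k+2$ samples and using $\binom{2k+1}{k} = \binom{2k+1}{k+1}$, one obtains $g_{2k+2}(p) = g_{2k+1}(p)$ for every $p \in [0,1]$: the additional sample in the even-size variant is exactly compensated by the random tie-breaking rule. A similar conditioning yields $g_{2k+1}(p) - g_{2k}(p) = (p - 1/2)\Pr[\mathrm{Bin}(2k, p) = k]$, which is nonnegative whenever $p \geq 1/2$. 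Combined with the symmetry $g_j(p) = 1 - g_j(1-p)$, these two identities give
\[
g_{j+1}(p) \geq g_j(p) \text{ for } p \geq 1/2, \qquad g_{j+1}(p) \leq g_j(p) \text{ for } p \leq 1/2,
\]
so an activated agent is more likely to adopt the current global majority under $P_{j+1}$ than under $P_j$.

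\textbf{Stage two (trajectory coupling via Strassen).} We lift the per-agent comparison to a coupling of the two processes. Equip $\{0, 1, \ldots, n\}$ with a partial order $\succeq^{*}$ that captures proximity to the absorbing set $\{0, n\}$, refined by the side of the tie value $n/2$ on which the state lies. The key technical step is a single-step coupling lemma: whenever the coupled chains satisfy $X^{(j+1)}_t \succeq^{*} X^{(j)}_t$ at time $t$, Strassen's Theorem produces a joint law of $(X^{(j)}_{t+1}, X^{(j+1)}_{t+1})$ with the correct single-step marginals such that $X^{(j+1)}_{t+1} \succeq^{*} X^{(j)}_{t+1}$. Verifying the hypothesis of Strassen reduces, via Stage one, to comparing binomial tail probabilities around $n/2$, which can be controlled through a shared-uniform coupling of each agent's Bernoulli update. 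Iterating the single-step coupling from the common initial state $X_0$ yields trajectories under which $X^{(j+1)}_t \succeq^{*} X^{(j)}_t$ for every $t \geq 0$. Since both $0$ and $n$ are maximal in $\succeq^{*}$, the $(j+1)$-chain reaches $\{0, n\}$ no later than the $j$-chain, so $T_{j+1}(X_0) \leq T_j(X_0)$ almost surely and hence $T_{j+1}(X_0) \preceq T_j(X_0)$.

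\textbf{Stage three (expectations and sequential model).} The identity $g_{2k+2} \equiv g_{2k+1}$ of Stage one immediately implies that $P_{2j+1}$ and $P_{2j+2}$ have identical one-step transition kernels, so $T_{2j+1}(X_0)$ and $T_{2j+2}(X_0)$ coincide in law and in particular have equal expectation. The remaining inequality $\mathbb{E}[T_{2j+1}(X_0)] \leq \mathbb{E}[T_{2j}(X_0)]$ follows directly from the stochastic dominance established in Stage two. The same arguments transfer verbatim to the sequential model, because only the number of agents activated per step changes while the per-agent comparison of Stage one is unaffected. The \textbf{main obstacle} is Stage two: a naive coupling via shared per-agent uniforms preserves the desired invariant only while the two chains remain on the same side of $n/2$, but the sign of $g_{j+1} - g_j$ flips once a chain crosses the tie value. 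The careful design of the partial order $\succeq^{*}$ together with the abstract existence guarantee provided by Strassen's Theorem is what allows us to maintain the invariant throughout the entire execution.
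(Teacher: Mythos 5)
Your three stages map almost one-to-one onto the paper's proof: Stage one is the paper's per-agent transition comparison (the identity $g_{2k+2}=g_{2k+1}$ and the gap $(p-\tfrac12)\Pr[\Bin(2k,p)=k]$ are exactly its Lemmas on $\Pl{2j-1}$ vs.\ $\Pl{2j}$ and $\Pl{2j}$ vs.\ $\Pl{2j+1}$), Stage three is its kernel-identity argument plus the telescoping of $\Pr[T>t]$, and Stage two is the same inductive, step-by-step product coupling whose existence at each step is certified by Strassen. So the architecture is the intended one, and your Stages one and three are sound.

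The under-specified point sits inside Stage two, and it is not the obstacle you single out. Once the coupled chains have separated, say $X^{(j+1)}_t=s$ strictly above $X^{(j)}_t=s'$ in your order, invoking Strassen at the next step requires comparing the one-step law of $P_{j+1}$ started from $s$ with that of $P_j$ started from $s'$. Stage one only compares the two processes at a \emph{common} state; you additionally need that each fixed process is monotone in its own starting state, i.e.\ the law of $X_{t+1}$ given $X_t=s$ dominates the law given $X_t=s'$ whenever $s$ dominates $s'$, and then you chain the two dominances by transitivity. In the gossip model this does come for free from your shared-uniform coupling, since there $X_{t+1}$ is binomial with success probability $g_j(\alpha)$ monotone in the majority fraction. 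In the sequential model it does not: the probability that the majority count increases is $\frac{n-s}{n}$ times a flip probability, and the prefactor $\frac{n-s}{n}$ \emph{decreases} in $s$ while the flip probability increases, so the monotonicity must be established by an explicit binomial computation (this "opposing forces" issue is precisely what the paper isolates as the main technical difficulty and proves in a separate lemma). By contrast, the obstacle you flag --- the sign flip of $g_{j+1}-g_j$ at $n/2$ and the design of the partial order --- largely dissolves once one tracks the count of the \emph{current} majority opinion and uses the symmetry $g_j(p)=1-g_j(1-p)$, which is how the paper sets things up. Supply the sequential-model monotonicity lemma and your proof closes.
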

}}

\medskip

\noindent In our second result we show that $3$-Majority $P_3$ converges in $O(n \log n)$ time w.h.p.\footnote{The expression \emph{with high probability (w.h.p.)} refers to a probability of $1 - n^{-\LDAUOmega{1}}$.}\,
To the best of our knowledge, this is the first analysis of $3$-Majority with sequential updates.
Our proof is similar to the proof by \textcite{DBLP:journals/nc/CondonHKM20} for the convergence time of approximate majority in tri-molecular chemical reaction networks.
We emphasize that \cref{thm:main-result} implies that all $j$-Majority processes with $j > 3$ converge in $O(n \log n)$ time \whp.

\begin{theorem}
\label{thm:convergencetime}
Let $T_3(X_0)$ be the convergence time of the $3$-Majority process $P_3$ in the sequential model with initial configuration $X_0$.
\begin{enumerate}[nosep]
\item It holds that $  T_3(X_0) \leq O \bc{ n \log n}$ \whp.
\item If $X_0 \geq n/2 + \zeta\sqrt{n\log n}$ for some sufficiently large constant $\zeta > 0$ then the initial majority opinion wins \whp.
\end{enumerate}
\end{theorem}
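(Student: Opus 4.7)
The plan is a drift analysis of the deviation $Y_t := X_t - n/2$ from the unstable equilibrium. A direct calculation from the one-step transition probabilities (the activated agent adopts opinion $a$ with probability $(X_t/n)^2(3-2X_t/n)$) yields
\begin{align*}
\Ex{X_{t+1}-X_t \mid X_t} \;=\; \frac{X_t}{n}\!\left(1-\frac{X_t}{n}\right)\!\left(\frac{2X_t}{n}-1\right) \;=\; \frac{Y_t}{2n} - \frac{2 Y_t^3}{n^3}.
\end{align*}
Since $X_{t+1}-X_t\in\{-1,0,1\}$ and a nonzero step has probability bounded below by a constant whenever $X_t/n \in [1/4,3/4]$, the step variance is $\Theta(1)$ in that regime. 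Thus $Y_t$ behaves like a biased random walk with multiplicative drift of order $1/(2n)$ pushing away from $0$, together with diffusive fluctuations.

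For item~1, I would split the execution into two phases. \emph{Phase~A (symmetry breaking: $|Y_t|<c_1\sqrt{n\log n}$).} Near $Y_t=0$ the multiplicative drift is too weak to be used directly, so I track the quadratic potential $\Phi_t:=Y_t^2$. Expanding $(Y_t+\Delta X_t)^2$ and using $\Ex{(\Delta X_t)^2\mid Y_t}=\Theta(1)$ gives
\begin{align*}
\Ex{\Phi_{t+1}\mid Y_t} \;\ge\; \Phi_t\!\left(1+\tfrac{1}{n}\right) + \Omega(1)
\end{align*}
throughout this regime. Hence the compensated process $\Phi_t - \sum_{s<t}\Ex{\Phi_{s+1}-\Phi_s\mid Y_s}$ is a martingale whose increments are bounded by $O(|Y_t|)=O(\sqrt{n\log n})$ until we exit the regime. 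Azuma--Hoeffding applied to this stopped martingale shows that $\Phi_t$ must exceed $c_1^2\,n\log n$ (equivalently $|Y_t|\ge c_1\sqrt{n\log n}$) within $O(n\log n)$ steps \whp. \emph{Phase~B (amplification and endgame: $|Y_t|\ge c_1\sqrt{n\log n}$).} A short calculation shows $\Ex{Y_{t+1}\mid Y_t}\ge Y_t(1+1/(4n))$ for $|Y_t|\le n/3$, so $|Y_t|$ doubles in $O(n)$ steps in expectation. Over any such $O(n)$-step window the martingale fluctuation is $O(\sqrt{n\log n})$ \whp\ by Azuma--Hoeffding, which is dominated by the current $|Y_t|$ and so preserves its sign. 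Iterating $O(\log n)$ doublings raises $|Y_t|$ to $n/3$ in $O(n\log n)$ total steps. From that point on, the minority count $M_t=\min(X_t,n-X_t)$ satisfies $\Ex{M_{t+1}\mid M_t}\le M_t(1-\Omega(1/n))$, and the standard multiplicative drift theorem drives $M_t$ to $0$ in a further $O(n\log n)$ steps \whp.

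For item~2, Phase~A is skipped: the hypothesis $Y_0\ge\zeta\sqrt{n\log n}$ places us directly inside Phase~B with positive sign. The overall martingale fluctuation over the $O(n\log n)$-step run is $O(\sqrt{n\log n})$ \whp; choosing $\zeta$ sufficiently large ensures this fluctuation stays strictly below the (exponentially growing) drift contribution, so $Y_t>0$ throughout and the initial majority opinion prevails.

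The main obstacle is Phase~A: because the drift on $Y_t$ vanishes at the origin, a multiplicative drift theorem cannot be applied to $Y_t$ directly, and the switch to the quadratic potential introduces increments of order $|Y_t|$ rather than $O(1)$ that must be tamed by stopping at the boundary of the regime before invoking Azuma--Hoeffding. This step is precisely where our proof mirrors the treatment of \textcite{DBLP:journals/nc/CondonHKM20} for the trimolecular approximate-majority chemical reaction network.
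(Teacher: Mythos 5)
Your overall architecture (outward drift $\Ex{X_{t+1}-X_t\mid X_t}=\alpha(1-\alpha)(2\alpha-1)$, a doubling/amplification phase once the bias exceeds $\Theta(\sqrt{n\log n})$, and a multiplicative-drift endgame on the minority count) matches the paper's Parts 1 and 2, and your treatment of item~2 and of Phase~B is sound. The gap is in Phase~A, which is exactly the part of item~1 that is genuinely hard. Your claim is that Azuma--Hoeffding applied to the compensated process $\Phi_t-\sum_{s<t}\Ex{\Phi_{s+1}-\Phi_s\mid Y_s}$ forces $\Phi_t\geq c_1^2\,n\log n$ within $T=O(n\log n)$ steps \whp. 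But in the regime $|Y_t|\leq c_1\sqrt{n\log n}$ the martingale increments are of order $|Y_t|=O(\sqrt{n\log n})$, so Azuma only controls the deviation after $T$ steps to $O\bc{\sqrt{T}\cdot\sqrt{n\log n}\cdot\sqrt{\log n}}=O\bc{n\log^{3/2}n}$ at the \whp\ level, while the accumulated additive drift is only $\Omega(T)=\Omega(n\log n)$; the fluctuation term swamps the drift and the conclusion does not follow. Even replacing Azuma by Freedman's inequality (the predictable variance of an increment is $\Theta(\Phi_t)$, hence at most $\Theta(TB)$ with $B=n\log n$ before exit) gives $\Prob{M_T\leq -B}\leq\exp\bc{-B^2/\Theta(TB)}=\exp(-\Theta(1))$, i.e., only a \emph{constant} success probability per window of $O(n\log n)$ steps. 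Boosting that to $1-n^{-\Omega(1)}$ by restarting costs another factor $\Theta(\log n)$, so your argument as written yields $O(n\log^2 n)$, not $O(n\log n)$. (The multiplicative term $\Phi_t/n$ in your drift bound cannot rescue this either: it is useful only once $\Phi_t=\Omega(n)$, and reaching $|Y_t|=\Omega(\sqrt n)$ from a tie \whp\ already runs into the same boosting loss.)

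This is precisely why the paper's Part~3 does not use a single potential-function/concentration argument but instead imports the checkpoint construction of Condon et al.: checkpoints $g_j=2^{j+3}\sqrt n$, a step of the meta-walk consisting of $2n$ productive updates, transition probability $\Omega(1)$ from $g_0$ to $g_1$ and $1-1/(2^j+O(1))$ from $g_j$ to $g_{j+1}$. The point of the geometrically improving success probabilities is that the expected number of fallbacks summed over all levels is $O(1)$, which is what recovers the full $O(n\log n)$ bound \whp\ without paying an extra $\log n$ for amplification. To repair your proof you would need to replace the single Azuma step in Phase~A by such a multi-scale argument (or an equivalent level-by-level gambler's-ruin analysis); the rest of your write-up can then stand essentially as is.
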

We remark that the convergence time of $O(n \log n)$ is asymptotically tight.
Indeed, for any number of time steps in $o(n \log n)$ there is a constant probability that two agents with opposing opinions are not activated even once.

\section{Analysis}

In this section we formally prove our theorems. We prove
\cref{thm:main-result} in \cref{sec:sequential} and \cref{sec:gossip} for the sequential model and the gossip model, respectively.  \Cref{thm:convergencetime} is then shown in \cref{sec:3Majority}. All technical details for the rigorous proofs can be found in \cref{apx:appendix}.

\subsection{Sequential Model}\label{sec:sequential}

We start our analysis with a comparison of one step of the processes $\Pl{j}$ and $\Pn{j+1}$ at time $t$ when starting in an identical state $X_{t}$. 
We are able to express the differences in the probabilities of increasing the majority opinion, decreasing it or remaining in the same state for the both processes. 
To this end, we visualize a possible coupling by a decision tree that incorporates all the different possibilities.
We will observe that, within this one step, we can couple the both processes such that the supposedly faster process increases the majority opinion with probability one if the supposedly slower process increases this opinion. 
This coupling will be guaranteed by an application of \emph{Strassen's Theorem}.

The proof of the main result will be conducted inductively. 
We start both processes in the same initial state and assume that there is a majority opinion $a$. 
Now, the aforementioned coupling ensures that, after the first step, the supposedly faster process will have at least as many agents of opinion $a$ than the supposedly slower process. 
Now, we show a kind of \emph{monotony} in the studied processes. 
Assume we have two instances of the same process, one in state $X_t = s$ and one in state $X'_t = s'$ where $X_t, X'_t$ denote the number of agents with opinion $a$ after $t$ steps. 
If $s > s'$, then the random variable $X_{t+1}$ will stochastically dominate $X'_{t+1}$, formally $X'_{t+1} \preceq X_{t+1}$. 
This observation is crucial. 
It allows us to show that in the second step, we can again construct a coupling such that, if the supposedly slower process moves, the supposedly faster process does as well almost surely.
Indeed, either both processes are in the same state, then we find the stochastic dominance by the decision trees, or the fast process has more agents of opinion $a$. 
But as stochastic dominance is transitive, we can construct a coupling via the triangle inequality.

Finally, we will describe the overall coupling of the two processes as the \emph{path-coupling} along those couplings per step which will prove the first part of \cref{thm:main-result}. 
The second part will follow analogously as we can show via the decision trees that in the comparison of $P_{2j-1}$ and $P_{2j}$, the chance to obtain the same state in the next step is equal under both processes while in the comparison of $P_{2j}$ and $P_{2j+1}$ those decision trees show that the probability of increasing the majority opinion is larger in $P_{2j+1}$.

\begin{observation}
The processes $P_{1}$ and $P_3$ have, almost surely, a finite stopping time. 
\end{observation}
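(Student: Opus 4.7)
The plan is to invoke a standard finite Markov-chain absorption argument. Both $P_1$ and $P_3$, in either the sequential or the gossip model, induce a Markov chain on the finite configuration space $\{a,b\}^n$, which by exchangeability of the agents can be summarized by the Markov chain on $\{0, 1, \ldots, n\}$ given by $X_t$. The consensus states $X_t \in \{0, n\}$ are absorbing, and the observation reduces to showing that every non-consensus state is transient.

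The key step is to exhibit, for each process and each non-consensus state, a strictly positive lower bound on the probability of moving closer to consensus in a single step. In the sequential model, the transition $k \to k+1$ in $P_1$ is realized whenever a minority agent is activated and samples a majority agent, which occurs with probability at least $(n-k)k/n^2 > 0$ for $0 < k < n$; in $P_3$, the same transition occurs with probability at least $(n-k)k^3/n^4 > 0$ by activating a minority agent and drawing three majority opinions with replacement. In the gossip model, the analogous bound comes from requiring that each of the $n - k$ minority agents simultaneously samples only majority agents, which again happens with strictly positive probability.

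Chaining at most $n$ such one-step moves from any $0 < k < n$ yields a uniform lower bound $p_\star = p_\star(n) > 0$ on the probability of reaching an absorbing state within $n$ consecutive steps, uniformly over the starting non-consensus state. Partition time into disjoint blocks of length $n$; as long as the process has not yet absorbed, each block independently hits an absorbing state with probability at least $p_\star$. Hence the number of blocks elapsing before absorption is stochastically dominated by a geometric random variable with success probability $p_\star$, which has finite mean. In particular the stopping time is almost surely finite and has finite expectation.

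I do not anticipate any conceptual obstacle, since the argument is entirely standard. The only point that requires care is verifying that the lower bound $p_\star(n)$ can be chosen uniformly over all non-consensus configurations from which a block may start, and this is immediate from the explicit one-step estimates above together with the observation that the concatenated path from $k$ to an absorbing state may be taken to use only the \emph{minimum} of the two one-step bounds along the way.
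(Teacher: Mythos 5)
Your proof is correct, but it takes a genuinely different route from the paper's. The paper does not argue the observation directly: it delegates the sequential case of $P_3$ to its own quantitative analysis of $3$-Majority (which yields an $O(n \log n)$ bound w.h.p.), and the remaining cases to the cited results of Schoenebeck--Yu and Ghaffari--Lengler. You instead give a self-contained, elementary absorption argument for a finite Markov chain: from every non-consensus state an absorbing state is reached within $n$ steps with probability bounded away from zero, so the absorption time is dominated by $n$ times a geometric random variable. This buys simplicity and independence from external results (and in fact works verbatim for every $P_j$, not only $j \in \{1,3\}$), at the price of an expected-time bound that is exponentially far from the true $O(n \log n)$; since the observation only asserts almost-sure finiteness, that loss is immaterial. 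One small imprecision to fix: in the gossip model it is not enough that every \emph{minority} agent samples only majority agents, because the majority agents resample simultaneously and may flip to $b$, so that round need not move the state closer to consensus. Condition instead on the event that \emph{all} $n$ agents sample only agents of opinion $a$, which still has positive probability $(k/n)^{jn}$ for $k \geq 1$ and forces consensus in a single round; with that adjustment the block argument goes through unchanged.
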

For the sequential process, we show this for $P_3$ in \cref{sec:3Majority}, while for $P_1$ this follows by the results of \Textcite{DBLP:conf/soda/SchoenebeckY18}.
For the Gossip Model, this is proven in \cite{DBLP:conf/podc/GhaffariL18}.
In this setting, Strassen's Theorem guarantees the existence of a coupling $\gamma \in \cP \bc{ \cE^2 }$ of $\mu$ and $\nu$ with the following property.
\begin{theorem}[Strassen's Theorem \cite{strassen1965}] \label{thm:strassen}
Let $\mu, \nu$ be probability measures on a Polish space endowed with a partial ordering $\preceq$ such that $\mu$ stochastically dominates $\nu$. Let $X \sim \mu$ and $Y \sim \nu$, then there is a coupling $\gamma$ of $\mu$ and $\nu$ such that, if $\bc{ \hat X, \hat Y } \sim \gamma$, we have
\begin{align*}
    X {\buildrel d \over =} \hat X, \qquad Y {\buildrel d \over =} \hat Y \qquad \text{and} \qquad \Prob{ \hat Y \preceq \hat X } = 1.
\end{align*}
\end{theorem}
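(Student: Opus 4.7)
My plan is to reduce the general Polish-space statement to a combinatorial transportation problem on finitely supported measures, and then pass to the limit by weak convergence. First I would treat the special case in which both $\mu$ and $\nu$ are finitely supported, say on a common finite subset $\cbc{e_1, \dots, e_m}$ of $\cE$ (any joint refinement of the two supports works). In this regime the construction of the desired coupling becomes a finite linear program: find non-negative numbers $\gamma_{ij}$ with the marginal constraints $\sum_j \gamma_{ij} = \mu(\cbc{e_i})$ and $\sum_i \gamma_{ij} = \nu(\cbc{e_j})$, subject to $\gamma_{ij} = 0$ whenever $e_j \not\preceq_{\cE} e_i$.

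The feasibility of this linear program is a direct application of Hall's marriage theorem (equivalently, of max-flow/min-cut on the bipartite graph that places an edge between $e_i$ on the $\mu$-side and $e_j$ on the $\nu$-side precisely when $e_j \preceq_{\cE} e_i$). The resulting Hall/cut condition asks that for every subset $S$ of the $\nu$-support, the $\mu$-mass on its upward closure $U_S = \cbc{ x \in \cE \,:\, \exists y \in S,\, y \preceq_{\cE} x }$ be at least $\nu(S)$. Since $U_S$ is by construction an upper set in $(\cE, \preceq_{\cE})$, the hypothesis that $\mu$ stochastically dominates $\nu$ supplies exactly this inequality, and so a valid transport plan $\gamma$ exists in the finitely supported case.

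To lift this to a general Polish space $\cE$, I would choose sequences $(\mu_n)$ and $(\nu_n)$ of finitely supported probability measures with $\mu_n \Rightarrow \mu$ and $\nu_n \Rightarrow \nu$, arranged so that $\mu_n$ continues to stochastically dominate $\nu_n$ for every $n$. The previous step supplies couplings $\gamma_n$ on $\cE \times \cE$ each supported on the closed set $R = \cbc{ (x,y) \,:\, y \preceq_{\cE} x }$. Marginal convergence forces the sequence $(\gamma_n)$ to be tight, so Prokhorov's theorem provides a weakly convergent subsequence with some limit $\gamma$. The marginals of $\gamma$ agree with $\mu$ and $\nu$, and the portmanteau theorem applied to the closed set $R$ forces $\gamma(R) = 1$, yielding the almost-sure ordering $\hat Y \preceq_{\cE} \hat X$.

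The main obstacle is the construction of the dominance-preserving approximants $\mu_n, \nu_n$. For $\cE = \mathbb{R}$ with its usual order the issue evaporates via the quantile coupling $(\hat X, \hat Y) = (F_\mu^{-1}(U), F_\nu^{-1}(U))$ with $U$ uniform on $[0,1]$, bypassing the combinatorial detour entirely. For a general partial order one must partition $\cE$ into Borel pieces that respect $\preceq_{\cE}$ and aggregate mass within each piece; verifying that stochastic dominance is preserved by such a discretization is the genuinely delicate step, and it is precisely where the standing assumption that $\preceq_{\cE}$ is a closed subset of $\cE \times \cE$ compatible with the Polish topology is actually used. Fortunately, the application inside the paper only ever invokes Strassen's theorem on finite state spaces $\cE$ (the number of majority-opinion agents, which lives in $\cbc{0, 1, \dots, n}$), so only the finite combinatorial case of the argument is ever truly needed downstream.
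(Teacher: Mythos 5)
The paper does not actually prove this statement: Strassen's theorem is imported verbatim with a citation to Strassen (1965), so there is no in-paper argument to compare against. Your sketch is, in substance, the standard textbook proof of the result (finite case via a transportation LP whose feasibility is certified by Hall's condition / max-flow--min-cut, followed by weak-convergence approximation and Prokhorov tightness for the general Polish case), and at that level it is sound; your closing observation that the paper only ever invokes the theorem on the totally ordered finite space $\{0,1,\dots,n\}$, where the combinatorial half already suffices, is also accurate and worth making.

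One genuine subtlety deserves to be named explicitly. Your Hall/cut condition requires $\mu(U_S)\ge\nu(S)$ for the upward closure $U_S$ of an \emph{arbitrary} finite set $S$, i.e.\ dominance of $\mu$ over $\nu$ on all (finitely generated) upper sets. The paper's stated definition of stochastic dominance only asserts $\mu(\{y: y\geq x\})\ge\nu(\{y : y\geq x\})$ for \emph{principal} upper sets. For a general partial order these are not equivalent, and with only the principal-upper-set hypothesis the theorem is false: on the four-element Boolean lattice $\{\emptyset,\{1\},\{2\},\{1,2\}\}$, take $\mu=\tfrac12\delta_{\emptyset}+\tfrac12\delta_{\{1,2\}}$ and $\nu=\tfrac12\delta_{\{1\}}+\tfrac12\delta_{\{2\}}$; then $\mu$ dominates $\nu$ on every principal upper set, yet the upper set $U=\{\{1\},\{2\},\{1,2\}\}$ has $\mu(U)=\tfrac12<1=\nu(U)$, so no monotone coupling exists. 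So your step "the hypothesis supplies exactly this inequality" silently upgrades the paper's definition to the standard all-upper-sets (equivalently, all bounded increasing functions) formulation. This is a defect of the paper's definition rather than of your argument, and it is immaterial for every application in the paper since there the order is total and the two notions coincide, but a complete write-up should either adopt the stronger definition or restrict to total orders. The remaining deferred step, constructing dominance-preserving finitely supported approximants on a general Polish space, is indeed the technically delicate part of the full theorem and is where closedness of the order relation enters; since the paper never needs it, leaving it as a pointer to the literature is reasonable.
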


\begin{restatable}{lemma}{LemDominance}
\label{Lem_dominance}
We find for $P_{2j}$ and $P_{2j+1}$ the following. Let $X_t$ denote the number of agents with majority opinion at time $t$. If $s > s'$, then for all $d \in \cbc{0, 1, ..., n}$
\begin{align*}
    \Prob{ X_{t+1} \geq d \mid X_{t} = s } \geq \Prob{ X_{t+1} \geq d \mid X_{t} = s' }.
\end{align*}
\end{restatable}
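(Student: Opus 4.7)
The plan is to prove the one-step monotonicity $X_{t+1}\mid X_t = s-1 \preceq X_{t+1}\mid X_t = s$ for every $s$, and then to deduce the stated claim for arbitrary $s > s'$ by transitivity of $\preceq$. I establish one-step monotonicity by constructing an explicit monotone coupling of the two chains; once this is in hand, stochastic dominance is immediate (and can, if desired, be read off formally via Strassen's Theorem, \cref{thm:strassen}).

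Relabel the agents so that in the chain at state $s$ the opinion-$a$ agents are $\{1,\dots,s\}$ and in the chain at state $s-1$ the opinion-$a$ agents are $\{1,\dots,s-1\}$; the two configurations agree on every agent except agent $s$. Feed both chains the \emph{same} randomness: a single activated agent $u\in[n]$, a single sample $v_1,\dots,v_j\in[n]$ drawn uniformly with replacement, and (for even $j$) a single uniform $[0,1]$ variable $\tau$ to break ties. Let $c_r := |\{i : v_i \leq r\}|$ denote the number of opinion-$a$ agents in the sample as seen by the chain at state $r$, so that $c_s \in \{c_{s-1},\, c_{s-1}+1\}$. Because adoption is a monotone threshold rule in the sample count and $\tau$ is shared, the event $A'$ that the smaller chain adopts $a$ is contained in the corresponding event $A$ for the larger chain.

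It remains to verify that under this coupling the new state of the larger chain always weakly dominates the new state of the smaller chain. This is a brief case split on whether $u \leq s-1$, $u = s$, or $u \geq s+1$ --- which determines $u$'s opinion in each chain --- crossed with the three joint outcomes of $(A,A')$ compatible with $A'\subseteq A$. For instance, in the pivotal case $u=s$ the agent holds opinion $a$ in the larger chain and opinion $b$ in the smaller chain, yielding new states $(s,s)$, $(s,s-1)$, and $(s-1,s-1)$ across the three branches, all weakly monotone. The remaining two cases are analogous and slightly easier, and in each of them the larger-chain state is at least the smaller-chain state. Iterating the one-step coupling from $s'+1$ up to $s$ yields the lemma.

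The main obstacle worth flagging is the tie-breaking in $P_{2j}$: if $c_{s-1}=j/2$, then the smaller chain flips a coin to decide adoption while the larger chain may have $c_s = j/2+1$ and adopt $a$ deterministically. Sharing the single variable $\tau$ across both chains is precisely what preserves the inclusion $A'\subseteq A$ across that threshold; without such coordination the monotone coupling breaks. For $P_{2j+1}$ no tie-breaking is necessary and the argument simplifies accordingly.
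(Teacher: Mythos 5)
Your proof is correct, but it takes a genuinely different route from the paper's. The paper argues analytically: it first observes that in the sequential model the state changes by at most one per step, so the claim is immediate for $s\ge s'+2$ and, for $s=s'+1$, reduces to the single nontrivial inequality $1-\Prob{X_{t+1}=s-1\mid X_t=s}\ge \Prob{X_{t+1}=s\mid X_t=s-1}$ (i.e.\ the case $d=s$); it then verifies this separately for $P_{2j+1}$ and $P_{2j}$ by writing out the transition probabilities as binomial tail probabilities and invoking monotonicity of the binomial distribution in its success parameter. You instead build an explicit monotone one-step coupling (shared activated agent, shared sample, shared tie-break variable) between the chains started at $s-1$ and $s$, check by a three-way case split on the activated agent that the coupled successor states are ordered, and conclude by transitivity; Strassen's Theorem is then not needed for this lemma at all (the paper likewise only invokes it later, to convert the dominance back into a coupling). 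Your argument is more conceptual and treats $P_{2j}$ and $P_{2j+1}$ uniformly, since the adoption rule is a monotone threshold in the sample count in both cases; the paper's computation is more mechanical but stays entirely within the probability-comparison framework used in the rest of Section 2. One small slip worth fixing: since the sample is drawn \emph{with replacement}, agent $s$ may appear several times among $v_1,\dots,v_j$, so in general $c_s-c_{s-1}=|\{i: v_i=s\}|$ can exceed $1$; the claim $c_s\in\{c_{s-1},c_{s-1}+1\}$ is therefore not literally true. This is harmless, because your argument only uses $c_s\ge c_{s-1}$ together with monotonicity of the threshold rule, but the statement should be weakened accordingly.
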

We provide the detailed calculation in \cref{proof_Lem_dominance} and get the following corollary.
\begin{corollary} \label{cor_dominance}
For any two processes $P, P'$, we find the following stochastic dominance. Let $X_t$ denote the number of agents with opinion $a$ with respect to process $P$ at time $t$ and let $X'_t$ be the analogous quantity with respect to $P'$.  
Assume that for any $d \in [n]$
\begin{align*}
    \Prob{ {X_{t+1}} \geq d \mid \bc{X_{t}} = s} \geq \Prob{ {X'_{t+1}} \geq d \mid {X'_{t}} = s},
\end{align*}
then we have also
\begin{align*}
    \Prob{ {X_{t+1}} \geq d \mid {X_{t}} = s + t'} \geq \Prob{ {X'_{t+1}} \geq d \mid {X'_{t}} = s},
\end{align*}
for any $d \in [n]$ and $t' > 0$ such that $s + t' \leq n$.
\end{corollary}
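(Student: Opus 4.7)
The plan is to derive the corollary from Lemma \ref{Lem_dominance} by a short transitivity argument. The hypothesis already compares the two processes when they start in the \emph{same} state $s$; what is missing is the fact that, within a single process $P$, starting from a strictly larger state $s + t' > s$ yields a stochastically at-least-as-large next state. This is exactly the monotonicity supplied by Lemma \ref{Lem_dominance}.

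Concretely, I would first apply Lemma \ref{Lem_dominance} to the process $P$ with the two starting states $s + t' > s$ (admissible since $t' > 0$ and $s + t' \leq n$), yielding
\[
\Prob{X_{t+1} \geq d \mid X_{t} = s + t'} \;\geq\; \Prob{X_{t+1} \geq d \mid X_{t} = s}
\]
for every $d \in \{0, 1, \ldots, n\}$. By the hypothesis of the corollary, we also have
\[
\Prob{X_{t+1} \geq d \mid X_{t} = s} \;\geq\; \Prob{X'_{t+1} \geq d \mid X'_{t} = s}
\]
for every such $d$. Chaining these two inequalities by transitivity of $\geq$ on $\mathbb{R}$ gives the desired statement at once. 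Note that no appeal to Strassen's theorem is needed here: the corollary is purely a statement about one-dimensional marginals, and transitivity of the pointwise ordering of tail probabilities is enough.

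The only point that asks for a bit of care is one of \emph{scope}: Lemma \ref{Lem_dominance} is formulated explicitly for $P_{2j}$ and $P_{2j+1}$, while the corollary is phrased for any two processes $P, P'$. This is not a genuine obstacle, because the corollary will only be applied to processes that are themselves instances of the majority processes covered by the lemma, so the monotonicity ingredient is always available. Hence the real content is entirely in Lemma \ref{Lem_dominance}, and the corollary itself requires no additional calculation.
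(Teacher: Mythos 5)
Your proposal is correct and matches the paper's (implicit) argument: the paper presents the corollary as an immediate consequence of Lemma~\ref{Lem_dominance}, i.e., monotonicity in the starting state within a single process chained by transitivity with the assumed cross-process dominance at state $s$. Your remark about the scope of Lemma~\ref{Lem_dominance} (stated for $P_{2j}$, $P_{2j+1}$ but invoked for whichever process plays the role of $P$) is also the right reading of how the corollary is used in the proof of the main theorem.
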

Let $X^{(k)}_{t}$  denote the number of agents with majority opinion after step $t$ of process $\Pn{k}$  for any $k \in \mathbb{N}$.
Furthermore, for a given agent $x$ we denote by $x_t^{(k)}$ its opinion in process $\Pl{k}$ at time $t$.
In the following we compare two processes with each other. The comparisons of $\Pl{2j}$ (even) to $\Pl{2j+1}$ (odd) and  $\Pl{2j-1}$ (odd) to $\Pl{2j}$ (even) require slightly different calculations.
Therefore, we have to show two similar lemmas for these two cases, \cref{lem:diffPoss2jvs2jplus1} for the former case and \cref{lem:diffPoss2jminus1vs2j} for the latter case.

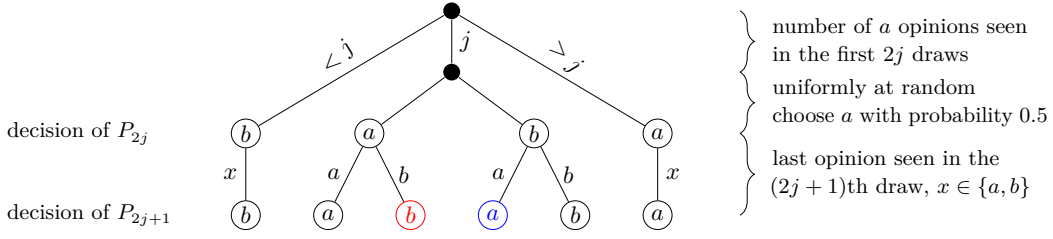
\begin{figure*}[t]
    \resizebox{\columnwidth}{!}{%
	\begin{tikzpicture}[%
	,scale=.6
	,dot/.style={circle, fill=black, inner sep=0pt, minimum size=7pt}
	,ring/.style={circle, draw, inner sep=0pt, minimum size=12pt}
	]
	\pgfmathsetmacro{\offsRight}{16}
	\pgfmathsetmacro{\zeroLvl}{10}
	\pgfmathsetmacro{\firstLvl}{8.5}
	\pgfmathsetmacro{\secondLvl}{7}
	\pgfmathsetmacro{\thirdLvl}{5}
	
	\node[dot]  (n1) at (9,\zeroLvl)        {};
	\node[ring] (n21) at (4,\secondLvl)     {$b$};
	\node[dot]  (n22) at (9,\firstLvl)      {};
	\node[ring] (n23) at (14,\secondLvl)    {$a$};
	
	\draw (n1) -- node[midway, above,rotate=45]{$< j$} (n21);
	\draw (n1) -- node[midway, right]{$j$}   (n22);
	\draw (n1) -- node[midway, above,rotate=-45]{$>j$}  (n23);
	
	\node[ring] (n31) at (7,\secondLvl)  {$a$};
	\node[ring] (n32) at (11,\secondLvl) {$b$};
	
	\draw (n22) -- node[midway, right]{} (n31);
	\draw (n22) -- node[midway, right]{} (n32);

	\node[ring] (n41) at (4,\thirdLvl) {$b$};
	\node[ring] (n42) at (6,\thirdLvl) {$a$};
	\node[ring, color = red] (n43) at (8,\thirdLvl) {$b$};
	\node[ring, color=blue] (n44) at (10,\thirdLvl) {$a$};
	\node[ring] (n45) at (12,\thirdLvl) {$b$};
	\node[ring] (n46) at (14,\thirdLvl) {$a$};
	
	\draw (n21) -- node[midway, left]{$x$}  (n41);
	\draw (n31) -- node[midway, left]{$a$}  (n42);
	\draw (n31) -- node[midway, right]{$b$} (n43);
	\draw (n32) -- node[midway, left]{$a$}  (n44);
	\draw (n32) -- node[midway, right]{$b$} (n45);
	\draw (n23) -- node[midway, right]{$x$} (n46);

	\draw[decorate,decoration={brace,amplitude=6pt}] (\offsRight,\zeroLvl) -- node [right=10pt,align=left] {\parbox{4cm}{\small number of $a$ opinions seen\newline in the first $2j$ draws}} (\offsRight,\firstLvl);
	\draw[decorate,decoration={brace,amplitude=6pt}] (\offsRight,\firstLvl) -- node [right=10pt,align=left] {\parbox{4cm}{\small uniformly at random\newline choose $a$ with probability $0.5$}} (\offsRight,\secondLvl);
	\draw[decorate,decoration={brace,amplitude=6pt}] (\offsRight,\secondLvl) -- node [right=10pt,align=left] {\parbox{4cm}{\small last opinion seen in the\newline \mbox{$(2j+1)$th} draw, $x \in \{a,b\}$ }} (\offsRight,\thirdLvl);
	
	\node[anchor=west] at (-2,\secondLvl) {\small decision of $\Pn{2j}$};
	\node[anchor=west] at (-2,\thirdLvl) {\small decision of $\Pn{2j+1}$};
	\end{tikzpicture}
}
	\caption{Decision tree comparing $\Pn{2j}$ and $\Pn{2j+1}$}
	\label{fig:2jvs2jplus1}
\end{figure*}
First we compare two successive processes $\Pl{2j}$ and $\Pl{2j+1}$.
The following lemma states that in process $\Pl{2j+1}$ it is more likely for an agent with opinion $b$ to change to $a$ while in process $\Pl{2j}$ it is more likely that an agent with opinion $a$ changes to opinion $b$ than in the other process respectively.
\begin{restatable}{lemma}{diffPossTwojvsTwojplusOne}
\label{lem:diffPoss2jvs2jplus1}
Let $x$ be an agent that is updated in the next step, $x_t^{(k)}$ its opinion in process $\Pl{k}$ at time $t$, $s \in [n]$ and $\alpha = \frac{s}{n}$. It holds that
\begin{align*}
    \MoveEqLeft \Prob{x_{t+1}^{(2j+1)} = a | x_t^{(2j+1)} =b,   X^{(2j+1)}_{t}= s}\\ 
    &= \Prob{x_{t+1}^{(2j)}= a | x_t^{(2j)} =b,  X^{(2j)}_{t}= s} + \frac{(2\alpha -1)}{2} \binom{2j}{j} \alpha^{j} (1-\alpha)^{j}
\intertext{and}
\MoveEqLeft\Prob{x_{t+1}^{(2j+1)} = b | x_t^{(2j+1)} =a,   X^{(2j+1)}_{t}= s} \\
&= \Prob{x_{t+1}^{(2j)} = b | x_t^{(2j)} =a,  X^{(2j)}_{t}= s}
- \frac{(2\alpha -1)}{2} \binom{2j}{j} \alpha^{j} (1-\alpha)^{j}.
\end{align*}
\end{restatable}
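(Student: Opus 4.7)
The plan is to prove both identities by a direct decision-tree calculation on a coupled pair of update steps, in the spirit of Figure~\ref{fig:2jvs2jplus1}. Since in both processes the samples of $x$ are drawn uniformly from $V$ with replacement and independently of $x$'s current opinion, the conditioning on $x_t^{(k)} \in \{a,b\}$ does not affect the law of $x_{t+1}^{(k)}$; it therefore suffices to compare the unconditional laws of $x_{t+1}^{(2j)}$ and $x_{t+1}^{(2j+1)}$ given the state $X_t = s$.

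I would first couple one step of $\Pl{2j}$ and one step of $\Pl{2j+1}$ by letting both processes use the same first $2j$ samples $v_1,\dots,v_{2j}$ and giving $\Pl{2j+1}$ one additional independent sample $v_{2j+1}$. Let $N\in\{0,\dots,2j\}$ denote the number of these first $2j$ samples with opinion $a$, so that $N \sim \mathrm{Bin}(2j,\alpha)$ with $\alpha = s/n$. On the event $N>j$ both processes adopt $a$, and on the event $N<j$ both adopt $b$, because even the extra sample cannot overturn a strict majority among $2j$ draws. Hence the two decisions can only diverge on the tie event $N=j$, which occurs with probability $\binom{2j}{j}\alpha^{j}(1-\alpha)^{j}$.

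Conditional on $N=j$, process $\Pl{2j}$ breaks the tie uniformly at random and adopts $a$ with probability $1/2$, whereas $\Pl{2j+1}$ adopts $a$ precisely when $v_{2j+1}$ carries opinion $a$, which happens with probability $\alpha$. Combining this with the previous paragraph,
\begin{align*}
    \Prob{x_{t+1}^{(2j+1)} = a \mid X_t^{(2j+1)} = s} - \Prob{x_{t+1}^{(2j)} = a \mid X_t^{(2j)} = s} = \binom{2j}{j}\alpha^{j}(1-\alpha)^{j}\bigl(\alpha - \tfrac{1}{2}\bigr),
\end{align*}
which is exactly the advertised correction $\tfrac{2\alpha-1}{2}\binom{2j}{j}\alpha^{j}(1-\alpha)^{j}$, yielding the first identity.

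The second identity is then immediate: for $k\in\{2j,2j+1\}$ the events $\{x_{t+1}^{(k)}=a\}$ and $\{x_{t+1}^{(k)}=b\}$ are complementary once we condition on the state, so $\Prob{x_{t+1}^{(k)}=b \mid x_t^{(k)}=a,\, X_t^{(k)}=s} = 1 - \Prob{x_{t+1}^{(k)}=a \mid x_t^{(k)}=a,\, X_t^{(k)}=s}$, and subtracting the two instances flips the sign of the correction term. The only real care needed is in the case analysis on the decision tree — verifying that a strict majority in the first $2j$ draws survives the $(2j+1)$-st draw and that $N=j$ is exactly the event where the two processes can disagree. Everything else reduces to a one-line binomial computation for $\Prob{N=j}$.
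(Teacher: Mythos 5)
Your proposal is correct and follows essentially the same route as the paper: couple the first $2j$ draws, observe that the two decisions can only diverge on the tie event $N=j$, and compare the tie-breaking probability $1/2$ in $P_{2j}$ with the probability $\alpha$ that the extra $(2j+1)$-st sample carries opinion $a$, which yields exactly the correction term $\frac{2\alpha-1}{2}\binom{2j}{j}\alpha^j(1-\alpha)^j$. Your derivation of the second identity by complementarity (using that the update law does not depend on the agent's current opinion) is a slight streamlining of the paper's symmetric recomputation, but it is the same argument in substance.
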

To prove these equations it is sufficient to study the cases where the two processes have a different outcome. The probability for these cases directly reflects the difference in probability for that specific outcome. These cases are highlighted in \cref{fig:2jvs2jplus1}. We provide the detailed calculation in \cref{proof_lem_diffPoss2jvs2jplus1}

This difference in probabilities allows us to prove that, given the same state, $\Pl{2j+1}$ stochastically dominates the process $\Pl{2j}$ in the next step:
\begin{restatable}{lemma}{differentProbabilities}
\label{lem:differentProbabilities}
For each $j \in \mathbb{N}_0$ and each $s \in \mathbb{N}$ with $s > n/2$ and any $d \in [n]$ it holds that
\begin{align*}
\Prob{X^{(2j+1)}_{t+1} \geq d | X^{(2j+1)}_{t} = s}    &\geq  \Prob{X^{(2j)}_{t+1} \geq d | X^{(2j)}_{t} = s}.
\end{align*}
\end{restatable}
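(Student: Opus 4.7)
The plan is to reduce the stochastic dominance claim to a short case analysis driven by the fact that in the sequential model exactly one agent is activated per time step. Starting from $X^{(k)}_{t} = s$, the next state satisfies $X^{(k)}_{t+1} \in \{s-1, s, s+1\}$, so for $d \leq s-1$ both sides of the claim equal one and for $d \geq s+2$ both equal zero. The only non-trivial thresholds are $d = s$ and $d = s+1$, and stochastic dominance collapses to two scalar inequalities.

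First I would introduce the per-agent flip probabilities
\begin{align*}
p^{(k)}_{+} &= \Prob{x^{(k)}_{t+1} = a \,|\, x^{(k)}_{t} = b,\, X^{(k)}_{t} = s}, \\
p^{(k)}_{-} &= \Prob{x^{(k)}_{t+1} = b \,|\, x^{(k)}_{t} = a,\, X^{(k)}_{t} = s},
\end{align*}
i.e.\ the one-step probabilities of flipping from the minority to the majority opinion and vice versa. Because the activated agent is chosen uniformly at random, it carries opinion $a$ with probability $\alpha = s/n$ and opinion $b$ with probability $1-\alpha$. Conditioning on its current opinion then yields
\begin{align*}
\Prob{X^{(k)}_{t+1} \geq s+1 \,|\, X^{(k)}_{t} = s} &= (1-\alpha)\, p^{(k)}_{+}, \\
\Prob{X^{(k)}_{t+1} \geq s \,|\, X^{(k)}_{t} = s} &= 1 - \alpha\, p^{(k)}_{-}.
\end{align*}

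Next I would invoke \cref{lem:diffPoss2jvs2jplus1}. The hypothesis $s > n/2$ gives $\alpha > 1/2$, so the correction term $\tfrac{2\alpha - 1}{2}\binom{2j}{j}\alpha^{j}(1-\alpha)^{j}$ appearing in that lemma is non-negative. Consequently $p^{(2j+1)}_{+} \geq p^{(2j)}_{+}$ and $p^{(2j+1)}_{-} \leq p^{(2j)}_{-}$; substituting these into the two displays above immediately settles both of the required inequalities, which together with the trivial cases $d \leq s-1$ and $d \geq s+2$ establish the stochastic dominance for every $d$.

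In my view the real technical content has already been absorbed into \cref{lem:diffPoss2jvs2jplus1} through the decision-tree analysis of \cref{fig:2jvs2jplus1}; the only work that remains is noticing that a sequential update is a $\pm 1$ move on $X_{t}$, which reduces stochastic dominance to two boundary thresholds. The only sanity check is that both thresholds $d = s$ and $d = s+1$ are admissible; this is immediate for $s < n$, while the consensus case $s = n$ is trivial since then $X^{(k)}_{t+1} = n$ deterministically under both processes.
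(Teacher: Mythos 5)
Your proposal is correct and follows essentially the same route as the paper's proof: reduce to the two non-trivial thresholds $d=s$ and $d=s+1$, express the transition probabilities as (probability of activating an agent of the relevant opinion) times the per-agent flip probabilities, and conclude from \cref{lem:diffPoss2jvs2jplus1} using $\alpha \geq 1/2$. The only cosmetic difference is that the paper writes out the explicit correction term $\frac{2\alpha-1}{2}\binom{2j}{j}\alpha^{j}(1-\alpha)^{j+1}$ where you only track its sign, which suffices.
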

Note that this inequality follows trivially for $d \leq s-1$ and $d > s+1$. To prove the property for the cases $d=s$ and $d=s+1$ we can directly use the properties from \cref{lem:diffPoss2jvs2jplus1}.

\begin{figure*}[t]
    \resizebox{\columnwidth}{!}{%
	\begin{tikzpicture}[%
	,scale=.6
	,dot/.style={circle, fill=black, inner sep=0pt, minimum size=7pt}
	,ring/.style={circle, draw, inner sep=0pt, minimum size=12pt}
	]
	
	\pgfmathsetmacro{\offsRight}{16}
	\pgfmathsetmacro{\firstLvl}{6.6}
	\pgfmathsetmacro{\secondLvl}{4.4}
	\pgfmathsetmacro{\thirdLvl}{2.2}
	\pgfmathsetmacro{\fourthLvl}{0}
	
	\node[dot] (n1) at (9,\firstLvl) {};
	\node[ring] (n21) at (3,\secondLvl) {$b$};
	\node[ring] (n22) at (7,\secondLvl) {$b$};
	\node[ring] (n23) at (11,\secondLvl) {$a$};
	\node[ring] (n24) at (15,\secondLvl) {$a$};
	
	\draw (n1) -- node[midway, above, rotate=15]{$< j-1$} (n21);
	\draw (n1) -- node[midway, below, rotate=45]{$j-1$} (n22);
	\draw (n1) -- node[midway, right]{$j$} (n23);
	\draw (n1) -- node[midway, above]{$>j$} (n24);
	
	\node[ring] (n31) at (3,\thirdLvl) {$b$};
	\node[ring] (n32) at (5,\thirdLvl) {$b$};
	\node[dot] (n33) at (7,\thirdLvl) {};
	\node[dot] (n34) at (11,\thirdLvl) {};
	\node[ring] (n35) at (13,\thirdLvl) {$a$};
	\node[ring] (n36) at (15,\thirdLvl) {$a$};
	
	\draw (n21) -- node[midway, right]{$x$} (n31);
	\draw (n22) -- node[midway, right]{$b$} (n32);
	\draw (n22) -- node[midway, right]{$a$} (n33);
	\draw (n23) -- node[midway, right]{$b$} (n34);
	\draw (n23) -- node[midway, right]{$a$} (n35);
	\draw (n24) -- node[midway, right]{$x$} (n36);
	
	\node[ring] (n41) at (6,\fourthLvl) {$b$};
	\node[ring,color=blue] (n42) at (8,\fourthLvl) {$a$};
	\node[ring, color = red] (n43) at (10,\fourthLvl) {$b$};
	\node[ring] (n44) at (12,\fourthLvl) {$a$};
	
	\draw (n33) -- node[midway, left]{} (n41);
	\draw (n33) -- node[midway, left]{} (n42);
	\draw (n34) -- node[midway, right]{} (n43);
	\draw (n34) -- node[midway, right]{} (n44);

	\draw[decorate,decoration={brace,amplitude=6pt}] (\offsRight,\firstLvl) -- node [right=12pt,align=left] {\parbox{4cm}{\small number of $a$ opinions seen\newline in the first $2j-1$ draws}} (\offsRight,\secondLvl);
	\draw[decorate,decoration={brace,amplitude=6pt}] (\offsRight,\secondLvl) -- node [right=12pt,align=left] {\parbox{4cm}{\small last opinion seen in the\newline $(2j)$th draw, $x \in \{a,b\}$}} (\offsRight,\thirdLvl);
	\draw[decorate,decoration={brace,amplitude=6pt}] (\offsRight,\thirdLvl) -- node [right=12pt,align=left] {\parbox{4cm}{\small uniformly at random\\ choose $a$ with probability $0.5$}} (\offsRight,\fourthLvl);
	
	\node[anchor=west] at (-2,\secondLvl) {\small decision of $\Pn{2j-1}$};
	\node[anchor=west] at (-2,\thirdLvl) {\small decision of $\Pn{2j}$};
	\node[anchor=west] at (-2,\fourthLvl) {\small decision of $\Pn{2j}$};
	\end{tikzpicture}
}	
	\caption{Decision tree comparing $\Pn{2j-1}$ and $\Pn{2j}$}
	\label{fig:2jplus1vs2jplus2}
\end{figure*}
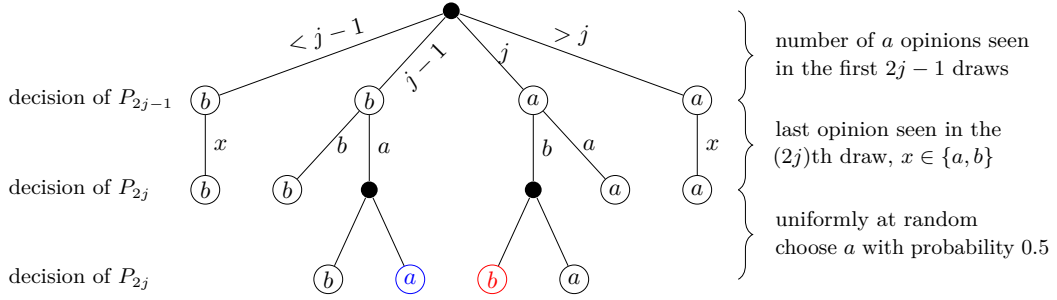

On the other hand, when comparing the processes $\Pl{2j-1}$ and $\Pl{2j}$ with respect to the difference in probability for an agent to change its opinion, we note that there is no difference in the probabilities given that all agents are in the same state.
\begin{restatable}{lemma}{diffPossTwojminusOnevsTwoj}
\label{lem:diffPoss2jminus1vs2j}
Let $x$ be an agent that is updated in the next step, $x_t^{(k)}$ its opinion in process $\Pl{k}$ at time $t$, and $s \in [n]$. It holds that
\begin{align*}
\Prob{x_{t+1}^{(2j-1)} = a | x_t^{(2j-1)} =b,   X^{(2j-1)}_{t}= s} 
&= \Prob{x_{t+1}^{(2j)} = a | x_t^{(2j)} =b,  X^{(2j)}_{t}= s}
\intertext{and}
\Prob{x_{t+1}^{(2j-1)} = b | x_t^{(2j-1)} =a,   X^{(2j-1)}_{t}= s} &= \Prob{x_{t+1}^{(2j)} = b | x_t^{(2j)} =a,  X^{(2j)}_{t}= s}.
\end{align*}
\end{restatable}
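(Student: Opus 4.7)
My plan is to couple the samples of $\Pl{2j-1}$ and $\Pl{2j}$ so that the $2j-1$ samples drawn by $\Pl{2j-1}$ coincide with the first $2j-1$ of the $2j$ samples drawn by $\Pl{2j}$; write $Y$ for the extra, independent $(2j)$-th sample of $\Pl{2j}$. Since each individual sample is uniform over all $n$ agents and is drawn independently of the activated agent, conditioning on the activated agent's current opinion does not alter the joint distribution of the samples. Thus, letting $S$ denote the number of $a$-opinions among the $2j-1$ shared samples and setting $\alpha = s/n$, we have $S \sim \Bin(2j-1, \alpha)$ and $\Prob{Y = a} = \alpha$ independently of $S$.

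Next, I would enumerate the values of $S$ to pin down the cases in which the two processes disagree. If $S \leq j-2$ both processes output $b$, and if $S \geq j+1$ both output $a$, irrespective of $Y$ and of any tie-break in $\Pl{2j}$. The only non-trivial cases are $S = j-1$ and $S = j$, depicted at the central branches of \cref{fig:2jplus1vs2jplus2}. In the first, $\Pl{2j-1}$ outputs $b$ deterministically, while $\Pl{2j}$ outputs $a$ only if $Y = a$ and its fair tie-break lands on $a$. In the second, $\Pl{2j-1}$ outputs $a$, whereas $\Pl{2j}$ outputs $b$ only if $Y = b$ and the tie-break lands on $b$.

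The heart of the argument is then a one-line cancellation. The surplus contributed to $\Prob{x_{t+1}^{(2j)} = a \mid x_t^{(2j)} = b, X^{(2j)}_t = s}$ over $\Prob{x_{t+1}^{(2j-1)} = a \mid x_t^{(2j-1)} = b, X^{(2j-1)}_t = s}$ from the case $S = j-1, Y = a$ equals $\tfrac{1}{2} \binom{2j-1}{j-1} \alpha^{j} (1-\alpha)^{j}$, while the corresponding deficit from the case $S = j, Y = b$ equals $\tfrac{1}{2} \binom{2j-1}{j} \alpha^{j} (1-\alpha)^{j}$. These cancel exactly by the Pascal-type symmetry $\binom{2j-1}{j-1} = \binom{2j-1}{j}$, which proves the first equality of the lemma; the second equality follows by complementation, since the sample-majority together with the fair tie-break is independent of the agent's previous opinion. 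The main obstacle is more conceptual than technical: unlike \cref{lem:diffPoss2jvs2jplus1}, which leaves a signed residual proportional to $(2\alpha - 1)$, here the extra sample of $\Pl{2j}$ together with its tie-break exactly compensates for the coarser majority rule of $\Pl{2j-1}$. Once the decision tree is drawn, the remaining bookkeeping is routine; the only care needed is keeping $S$, $Y$, and the tie-breaking coin separate in the combinatorial accounting.
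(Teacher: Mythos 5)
Your proof is correct and follows essentially the same route as the paper: the same coupling of the first $2j-1$ draws, the same identification of the two disagreement branches $S=j-1,Y=a$ and $S=j,Y=b$ in the decision tree, and the same exact cancellation via $\binom{2j-1}{j-1}=\binom{2j-1}{j}$. Your derivation of the second equality by complementation (using that the update rule ignores the agent's own opinion) is a small, valid shortcut where the paper instead repeats the symmetric case analysis.
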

A similar statement has previously been shown by \textcite{DBLP:journals/dc/FraigniaudN19} for a related model.
The proof of \cref{lem:diffPoss2jminus1vs2j} is analogous to the proof of \cref{lem:diffPoss2jvs2jplus1}.
For completeness, it can be found in \cref{proof_lem_diffPoss2jminus1vs2j}.

\begin{restatable}{lemma}{differentProbabilitiesTwo}
\label{lem:differentProbabilities2}
For each $j \in \mathbb{N}_0$ and each $s \in \mathbb{N}$ with $s > n/2$ and any $d \in [n]$ it holds that
\begin{align*}
\Prob{X^{(2j)}_{t+1} \geq d | X^{(2j)}_{t} = s}    & = \Prob{X^{(2j-1)}_{t+1} \geq d | X^{(2j-1)}_{t} = s}.
\end{align*}
\end{restatable}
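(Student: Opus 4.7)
The plan is to mimic the proof of \cref{lem:differentProbabilities} but to upgrade its conclusion from stochastic dominance to exact equality, exploiting the stronger single-agent identities supplied by \cref{lem:diffPoss2jminus1vs2j}. In the sequential model only one agent is activated per step, so from $X_t = s$ the next state $X_{t+1}$ must lie in $\{s-1, s, s+1\}$. Hence it is enough to verify that the two off-diagonal transition probabilities $\Prob{X^{(k)}_{t+1} = s+1 \mid X^{(k)}_t = s}$ and $\Prob{X^{(k)}_{t+1} = s-1 \mid X^{(k)}_t = s}$ coincide for $k = 2j-1$ and $k = 2j$; equality for all tails $\Prob{X^{(k)}_{t+1} \geq d \mid X^{(k)}_t = s}$ with $d \in [n]$ then follows automatically.

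First I would condition on the identity of the activated agent. Because the activation is uniform and independent of the subsequent sample of $j$ other opinions, the activated agent holds opinion $a$ with probability $s/n$ and opinion $b$ with probability $(n-s)/n$, and this split does not depend on $k$. The transition $s \to s+1$ occurs precisely when a $b$-agent is activated and decides to switch to $a$, an event of probability $\frac{n-s}{n}\cdot\Prob{x^{(k)}_{t+1} = a \mid x^{(k)}_t = b,\, X^{(k)}_t = s}$. By the first identity of \cref{lem:diffPoss2jminus1vs2j} the inner factor agrees for $k = 2j-1$ and $k = 2j$, giving equality of the upward transition probability. The mirror argument, invoking the second identity of \cref{lem:diffPoss2jminus1vs2j}, handles the downward transition $s \to s-1$.

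Equality of the two off-diagonal probabilities forces equality of the self-loop probability by complementation, so the full one-step conditional distribution of $X^{(k)}_{t+1}$ given $X^{(k)}_t = s$ is identical in both processes, and the claim follows by summing over $\{d, d+1, \dots, n\}$. The only subtle point is convincing oneself that the uniform tie-breaking coin in $P_{2j}$ produces the same marginal opinion-change probability as the extra $(2j)$-th draw in $P_{2j-1}$; this is exactly the equivalence encoded by the bottom two levels of \cref{fig:2jplus1vs2jplus2} and it is precisely the content of \cref{lem:diffPoss2jminus1vs2j}. Once that observation is in place, the rest of the argument is a single application of the law of total probability and presents no further obstacle.
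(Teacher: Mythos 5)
Your proposal is correct and follows essentially the same route as the paper: both arguments reduce the tail equality to the two off-diagonal one-step transition probabilities, factor each as (probability of activating an agent of the relevant opinion) times (single-agent switching probability), and invoke the exact identities of \cref{lem:diffPoss2jminus1vs2j}. If anything, your write-up is slightly cleaner, since the paper's displayed computation for the $s\to s+1$ case carries a spurious additive term that contradicts the equality it just derived and is evidently a copy-paste artifact from the proof of \cref{lem:differentProbabilities}.
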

The proof can be found in \cref{proof_lem_differentProbabilities}.
We are now ready to put everything together and prove our main result for the sequential model.
\begin{proof}[Proof of \cref{thm:main-result}]
We prove \cref{thm:main-result} by induction given the initial state $X_0$ and start with the case $  T_{2j} \bc{ X_0 } \preceq   T_{2j+1} \bc{ X_0 }$.
Given $X_0$, \cref{lem:differentProbabilities} guarantees that for all $s > 0$
\begin{align*}
    \Prob{ X_1^{(2j + 1)} \geq s \mid X_0 } \geq  \Prob{ X_1^{(2j)} \geq s \mid X_0 }.
\end{align*}
Therefore, by \cref{thm:strassen}, we find a coupling $\gamma_{1}$ such that under $\gamma$, $X_1^{(2j + 1)} \geq X_1^{(2j)}$ almost surely.
Now, assume that we constructed a coupling $\gamma^{(t)} = \gamma_1 \otimes \ldots \otimes \gamma^{(t)}$ of $\bc{ X_1^{2j}, \ldots, X_{t}^{2j} }$ and $\bc{ X_1^{2j+1}, \ldots, X_{t}^{2j+1} }$. Under $\gamma^{(t)}$ we have by induction hypothesis that 
\begin{align*}
     \Pr_{\gamma^{(t)}} \bc{ X_{t}^{2j+1} \geq X_{t}^{2j} } = 1.
\end{align*}
Therefore, by \cref{cor_dominance} and \cref{lem:differentProbabilities}, we find given $X_t^{(2j + 1)} \geq X_t^{(2j)}$ that
\begin{align*}
    \Prob{ X_{t+1}^{(2j + 1)} \geq s \mid X_t^{(2j + 1)} } \geq  \Prob{ X_{t+1}^{(2j)} \geq s \mid X_t^{(2j)} }.
\end{align*}
Thus, \cref{thm:strassen} implies, given $X_t^{(2j + 1)} \geq X_t^{(2j)}$ the existence of a coupling $\gamma_{t+1}$ such that
\begin{align*}
     \Pr_{\gamma_{t+1}} \bc{ X_{t+1}^{2j+1} \geq X_{t+1}^{2j} } = 1.
\end{align*}
We define
    $\gamma^{(t+1)} = \gamma^{(t)} \otimes \gamma_{t+1}$
and  $  T_{2j} \bc{ X_0 } \preceq   T_{2j+1} \bc{ X_0 }$ follows by induction.

Next, we need to prove that $  T_{2j-1} \bc{ X_0 } \preceq   T_{2j} \bc{ X_0 }$. This follows completely analogously with \cref{lem:differentProbabilities} replaced by \cref{lem:differentProbabilities2}.

Finally, we need to construct the bounds on the expectation. Given the coupling $\gamma^{  T(2j+1)(X_0)}$ of $P_{2j+1}$ and $P_{2j}$, we find that, under this coupling, for every step $t = 1 \ldots   T_{2j+1}$, we have $X_{t}^{(2j+1)} \geq X_{t}^{(2j)}$ almost surely and therefore $\Erw \brk{   T_{2j+1}(X_0)} \leq \Erw \brk{   T_{2j}(X_0)}$. The second expectation is a bit more subtle. While it is analogously easy to prove that $\leq$ holds due to the constructed coupling, the equality in expectation needs to be conducted explicitly.
To this end, we get from \cref{lem:differentProbabilities2} that 
\begin{align*}
    \Prob{ X_{t}^{(2j-1)} < n \mid X_{t-1} = s} = \Prob{ X_{t}^{(2j)} < n \mid X_{t-1} = s}.
\end{align*}
Therefore, inductively,
\begin{align*}
    \Prob{ X_{t}^{(2j-1)} < n \mid X_{0} = s} = \Prob{ X_{t}^{(2j)} < n \mid X_{0} = s}.
\end{align*}
But then,
\begin{align*}
    \Erw \brk{   T_{2j - 1} \bc{X_0} } & = \sum_{t \geq 0} \Prob{   T_{2j - 1} \bc{X_0} > t } = \sum_{t \geq 0} \Prob{ X_{t}^{(2j-1)} < n \mid X_0 } \\
    & = \sum_{t \geq 0} \Prob{ X_{t}^{(2j)} < n \mid X_0 } = \sum_{t \geq 0} \Prob{   T_{2j} \bc{X_0} > t } = \Erw \brk{   T_{2j} \bc{X_0} }. \qedhere
\end{align*}
\end{proof}

\subsection{Gossip Model}\label{sec:gossip}
We now extend the previous analysis to the gossip model. Recall that in this model all agents are activated in parallel rounds.
In such a round, all agents sample $j$ other agents $v_1, \dots, v_j$ u.a.r.
Then they compute their new opinion as the majority opinion among the sample, breaking ties u.a.r.
Here, the agents use the opinions of the other agents from the beginning of the round.
At the end of the round (once all agents have computed the new opinion) all agents synchronously update their opinion to the new value.

\begin{proof}[Proof of \cref{thm:main-result} for the Gossip Model]
In our extended analysis we use a coupling of the two parallel processes similarly to the coupling of one step of the gossip model.
Observe that in process $P_{2j}$ every agent samples $2j$ agents u.a.r., while in process $P_{2j+1}$ every agent samples $2j+1$ agents. 
Therefore, process $P_{2j}$ makes $2j \cdot n$ random choices from $[n]$ in each round, while $P_{2j+1}$ makes $(2j+1)\cdot n$ random choices.
We use the straight-forward coupling and define that the $2j$ choices of every agent $u$ in $P_{2j}$ are identical to the first $2j$ choices of agent $u$ in process $P_{2j+1}$.

We now analyze the deviation of the two processes that stems from the $2j+1$th additional choice in process $P_{2j+1}$. 
Here we observe the following.
In each round of process $P_{2j}$ there are three disjoint sets of agents, $M_a, M_b,$ and $M_u$.
The sets $M_a$ and $M_b$ are comprised of agents that sample at least $j+1$ agents of the majority opinion $a$ and the minority opinion $b$, respectively.
All other agents are in $M_u$.
The agents in $M_a$ will adopt opinion $a$ at the end of the round in both processes:
the $j+1$ samples of opinion $a$ is larger than the winning margin in both processes, which is $j$ in $P_{2j}$ and $(2j+1)/2$ in $P_{2j+1}$.
Analogously, the agents in $M_b$ will adopt opinion $b$ in both processes.
Finally, the interesting group are the $M_u$ agents. 
These agents have sampled a tie in process $P_{2j}$, meaning they have sampled $j$ agents with opinion $a$ and another $j$ agents with opinion $b$.
This means, in process $P_{2j}$ all agents in $M_u$ adopt either opinion $a$ or opinion $b$ with probability $1/2$ each.
In process $P_{2j+1}$, however, the $2j+1$th sample makes the decision.
(Recall that in a process $P_{2j+1}$ with an odd number of samples and $k=2$ opinions no ties are possible.)
Therefore, in process $P_{2j+1}$ all agents in $M_u$ adopt opinion $a$ with probability $\alpha$ and opinion $b$ with probability $(1-\alpha)$.

Summarizing, we have the following.
Due to the coupling of $P_{2j}$ with $P_{2j+1}$, all agents in $M_a$ or $M_b$ behave exactly the same in both processes.
We use $Z_a = |M_a|$ and $Z_b = |M_b|$ to denote their respective numbers.(Observe that $Z_a$ and $Z_b$ are the same in $P_{2j}$ and $P_{2j+1}$ due to the coupling.)
In the following, we condition on the event that $|M_u| = m_u$.
For the agents in $M_u$, the outcome can be described by binomial random variables:
let $Z_u^{2j}$ in process $P_{2j}$ and $Z_u^{2j+1}$ in process $P_{2j+1}$ be the numbers of agents in $M_u$ that adopt opinion $a$.
Then 
\begin{align*}
Z_u^{2j} \sim \Bin(m_u,1/2) && \text{ and } && Z_u^{2j+1} \sim \Bin(m_u,\alpha)
\end{align*}
with $\alpha \geq 1/2$.
Irrespective of the value of $m_u$ we observe from well-known properties of binomial distributions 
that $Z_u^{2j}$ is stochastically dominated by $Z_u^{2j+1}$, and hence
\[
X_{t+1}^{(2j)} = Z_a + Z_u^{2j} \prec Z_a + Z_u^{2j+1} = X_{t+1}^{(2j+1)} .
\]

The proof for the dominance of $P_{2j}$ over $P_{2j-1}$ uses similar definitions and follows analogously, with exception that $M_u$ represents the nodes that are undecided after the first $2j-2$ draws and that $Z_u^{2j-1}$ and $Z_u^{2j}$ follow the same binomial distribution $\Bin(m_u,\alpha)$.

The only ingredient that is left to prove is the monotonicity within one specific process. Indeed, if an analogous result as \cref{Lem_dominance} in the sequential model can be proven, the path coupling argument follows the same lines as in the previous section.
\begin{restatable}{lemma}{LemDominanceGossip}
\label{Lem_dominanceGossip}
We find for $P_{2j}$ and $P_{2j+1}$ in the Gossip model following. Let $X_t$ denote the number of agents with majority opinion at time $t$. If $s > s'$, then for all $d \in \cbc{0, 1, ..., n}$
\begin{align*}
    \Prob{ X_{t+1} \geq d \mid X_{t} = s } \geq \Prob{ X_{t+1} \geq d \mid X_{t} = s' }.
\end{align*}
\end{restatable}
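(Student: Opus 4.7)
The plan is to construct an explicit coupling of one round of the gossip process started at state $s$ with one round started at state $s' < s$ such that the first pointwise dominates the second, yielding $X_{t+1}^{(s)} \geq X_{t+1}^{(s')}$ almost surely and hence the desired stochastic dominance. This avoids any appeal to Strassen's theorem and instead exploits the parallel, per-agent structure of a gossip round directly.

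Since the transition rule depends only on the counts and not on agent identities, I would first fix concrete configurations realizing $X_t = s$ and $X_t = s'$ such that the set $A \subseteq [n]$ of agents holding opinion $a$ in the $s$-configuration is a superset of the analogous set $A' \subseteq [n]$ in the $s'$-configuration, with $|A| = s$ and $|A'| = s'$. Then I would couple the two rounds by letting every agent $u$ use the \emph{same} tuple of sample indices $\bc{v_1^u, \ldots, v_j^u}$ in both processes, and (in the even case $P_{2j}$) the \emph{same} tie-breaking coin $U^u \in [0,1]$. Writing $k_s^u$ and $k_{s'}^u$ for the number of indices $i$ with $v_i^u \in A$ and with $v_i^u \in A'$ respectively, the inclusion $A' \subseteq A$ deterministically forces $k_s^u \geq k_{s'}^u$ for every agent $u$.

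The remaining step is to check that, under this coupling, whenever agent $u$ adopts opinion $a$ in the $s'$-process, it also adopts $a$ in the $s$-process. For $P_{2j+1}$ this is immediate from the deterministic threshold rule ``adopt $a$ iff $k \geq j+1$'' together with $k_s^u \geq k_{s'}^u$. For $P_{2j}$ I would split into the three cases $k_{s'}^u > j$, $k_{s'}^u = j$, and $k_{s'}^u < j$; the monotonicity $k_s^u \geq k_{s'}^u$ combined with the shared coin $U^u$ (which forces the same decision whenever both processes see a tie, and is irrelevant as soon as the $s$-process sees a strict $a$-majority) gives the implication in each case. Consequently the set of agents adopting $a$ in round $t+1$ of the $s$-process contains the analogous set of the $s'$-process, which gives the claimed pointwise dominance.

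The delicate point I expect to have to spell out carefully is the tie-breaking in the even case: ties can arise in one process but not the other, so a naive ``share all randomness'' recipe could break if tie coins were drawn only conditionally. Declaring an unconditional coin $U^u$ per agent at the start of the round and consulting it only when a tie actually arises resolves this cleanly, and in the odd case the issue does not even come up. With \cref{Lem_dominanceGossip} in hand, the inductive path-coupling argument used for the sequential proof of \cref{thm:main-result} carries over to the gossip model essentially verbatim.
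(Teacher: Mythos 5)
Your proposal is correct, and it takes a genuinely different route from the paper. The paper's proof works at the level of the aggregate counts: it observes that $\lvert M_a\rvert$ (agents sampling at least $j+1$ copies of $a$) is binomially distributed with a success probability that is monotone in $s$, argues the analogous reverse dominance for $\lvert M_b\rvert$, and then invokes Strassen's theorem to turn these distributional comparisons into the claim. You instead build an explicit monotone coupling at the level of individual agents: fix nested configurations $A' \subseteq A$, reuse the same sample indices and the same unconditional tie-breaking coin for each agent in both runs, and observe that $k_s^u \geq k_{s'}^u$ forces the set of $a$-adopters under $s$ to contain the set under $s'$. Your version is more elementary (no appeal to Strassen for this step) and, as written, more complete: the paper's two separate dominance statements for $\lvert M_a\rvert$ and $\lvert M_b\rvert$ do not by themselves pin down the joint behaviour of $\bigl(\lvert M_a\rvert,\lvert M_b\rvert,\lvert M_u\rvert\bigr)$ or the tie-breaking of the $M_u$ agents, whereas your single coupling handles all agents, including the tied ones, simultaneously and yields pointwise dominance $X_{t+1}^{(s)} \geq X_{t+1}^{(s')}$ directly. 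Your observation about drawing the tie coin unconditionally at the start of the round is exactly the right way to make the shared-randomness recipe airtight in the even case. What the paper's formulation buys in exchange is uniformity with the rest of its machinery, since the same binomial-monotonicity-plus-Strassen pattern is used for \cref{Lem_dominance} in the sequential model.
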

\begin{proof}
As before, let $M_a$ and $M_b$ denote the sets of agents that sample at least $j+1$ agents of the majority opinion $a$ and the minority opinion $b$, respectively.
If $s > s'$, the monotonicity of the binomial distribution yields
\begin{align*}
    \abs{M_a}_{\mid X_t = s} \succeq \abs{M_a}_{\mid X_t = s'} \text{ and } \abs{M_b}_{\mid X_t = s} \preceq \abs{M_b}_{\mid X_t = s'}. 
\end{align*}
Therefore, the lemma follows from Strassen's theorem.
\end{proof}
Now the path coupling follows analogously to the previous section.
\end{proof}

\subsection{Analysis of {3}-Majority}
\label{sec:3Majority}

In this section we analyze $3$-Majority in the sequential model.
We start with an overview of the proof of \cref{thm:convergencetime}.
The proof consists of three parts.
The first part follows along the lines of the proof by \textcite{DBLP:journals/nc/CondonHKM20} for the related approximate majority process in tri-molecular chemical reaction networks.
It shows that we preserve the initial majority (assuming a bias of $\sqrt{n \log n}$) and reach a bias of $\epsilon n$ within $O(n \log n)$ time \whp.
(Recall that the \emph{bias} is defined as the difference of the numbers of agents supporting opinion $a$ and opinion $b$.)
The proof is based on the following result for gambler's ruin from \cite{Feller1968}.
\begin{lemma}[Asymmetric one-dimensional random walk, {\cite[\nopp XIV.2]{Feller1968}}, version from \cite{DBLP:journals/nc/CondonHKM20}] \label{lem:feller}
If we run an arbitrarily long sequence of independent trials, each with success probability at least $p$, then the probability that the number of failures ever exceeds the number of successes by $b$ is at most $\bc{\frac{1-p}{p}}^b$.
\end{lemma}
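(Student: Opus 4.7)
The plan is to reduce the lemma to the classical gambler's ruin estimate for an iid Bernoulli($p$) walk, and then to establish that estimate via an exponential martingale combined with optional stopping. For the reduction I would enlarge the probability space: for each trial $i$ with success probability $p_i \geq p$, let $U_i$ be uniform on $[0,1]$, declare trial $i$ to succeed precisely when $U_i < p_i$, and set $Y_i := \mathbf{1}\cbc{U_i < p}$. Then $(Y_i)_{i \geq 1}$ are iid Bernoulli($p$), and pointwise $Y_i = 1$ implies that trial $i$ succeeds. Consequently, the cumulative successes of the $Y$-walk are dominated by the cumulative successes of the original walk, and the cumulative failures of the $Y$-walk are at least those of the original walk. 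Thus the event "failures exceed successes by $b$ at some point" for the original process is contained in the analogous event for the iid Bernoulli($p$) process, and it suffices to bound the latter.

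For the iid case let $S_n$ denote the net gain (successes minus failures) after $n$ steps, set $S_0 = 0$, write $q = 1 - p$, and let $\tau := \inf\cbc{n \geq 0 : S_n = -b}$. The key object is the positive martingale $M_n := (q/p)^{S_n}$, which is indeed a martingale since for an increment $X \in \cbc{-1, +1}$ one has
\begin{align*}
    \Erw \brk{(q/p)^{X}} = p \cdot \bc{q/p} + q \cdot \bc{p/q} = q + p = 1.
\end{align*}
Applying optional stopping to the bounded stopping time $\tau \wedge T$ gives $\Erw \brk{M_{\tau \wedge T}} = 1$, and splitting according to whether $\tau \leq T$ yields
\begin{align*}
    1 = (p/q)^{b}\, \Prob{\tau \leq T} + \Erw \brk{M_T \mathbf{1}\cbc{\tau > T}}.
\end{align*}
Letting $T \to \infty$, the residual term tends to $0$: on $\cbc{\tau = \infty}$ the strong law forces $S_n \to +\infty$ almost surely (because $p > q$), so $M_T = (q/p)^{S_T} \to 0$, while $M_{\tau \wedge T} \leq (p/q)^{b}$ is uniformly bounded so that dominated convergence applies. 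This produces $\Prob{\tau < \infty} = (q/p)^{b}$, which is precisely the iid case of the lemma.

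The main obstacle, and the only point requiring genuine care, is the passage $T \to \infty$ in the optional stopping step: $\tau$ need not be almost surely finite and $M_n$ is not globally bounded along trajectories drifting downward, so one must simultaneously exploit the deterministic bound $S_{\tau \wedge T} \geq -b$ (which controls $M_{\tau \wedge T}$) and the strictly positive drift $p - q > 0$ (which forces $M_T \to 0$ on $\cbc{\tau = \infty}$) in order to rule out mass escaping to infinity. Once this is in place, the coupling from the first paragraph converts the iid identity into the asserted upper bound $\bc{(1-p)/p}^{b}$ for arbitrary independent trials with success probabilities at least $p$.
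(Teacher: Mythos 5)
Your proof is correct. Note that the paper does not prove this lemma at all---it is imported verbatim from Feller XIV.2 via Condon et al.---so there is no in-paper argument to compare against; your write-up supplies the missing classical derivation. Both ingredients check out: the monotone coupling $Y_i=\mathbf{1}\{U_i<p\}\le\mathbf{1}\{U_i<p_i\}$ correctly reduces the ``probability at least $p$'' statement to the i.i.d.\ Bernoulli($p$) case (since failures minus successes is pointwise larger for the $Y$-walk), and the exponential martingale $(q/p)^{S_n}$ with optional stopping at the bounded time $\tau\wedge T$ yields the exact gambler's-ruin value $(q/p)^b$ for the i.i.d.\ walk. Your treatment of the limit $T\to\infty$ is the right one: on $\{\tau>T\}$ the walk has not crossed $-b$, so $M_T\le (p/q)^{b}$ uniformly, and on $\{\tau=\infty\}$ the strong law drives $S_T\to+\infty$ and hence $M_T\to 0$, so dominated convergence kills the residual term. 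Two cosmetic remarks. First, you implicitly assume $p>1/2$ when invoking the positive drift; for $p\le 1/2$ the claimed bound $\bc{(1-p)/p}^b$ is at least $1$ and the lemma is vacuous, so that case should be dismissed in one sentence rather than left unaddressed. Second, Feller's own derivation solves the ruin recurrence $q_z=p\,q_{z+1}+(1-p)\,q_{z-1}$ by elementary means, so your martingale route is a standard alternative to the cited source rather than a reproduction of it---slightly more machinery, but a cleaner justification of why no mass escapes to infinity.
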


In the second part we use a drift analysis based on \cite{Lengler2020} to show that we reach consensus on the initial majority opinion quickly once we have a bias of order $\Omega{n}$.
The proof is based on a carefully conducted drift-analysis, where we use the following fairly recent result.
\begin{theorem}[Special case of Theorem 18 of \cite{Lengler2020}] \label{thm:lengler}
Let $\cbc{Y_t}_{t \geq 0}$ be a sequence of non-negative random variables with a finite state space $S \subset \mathbb{R}_{\geq 0}$ such that $0 \in S$. Define 
\begin{align*}
    s_{\min} = \min \bc{S \setminus \cbc{0}} \quad \text{and} \quad   T = \inf \cbc{ t \geq 0 \mid Y_0 = 0 }.
\end{align*}
If $Y_0 = s_0$ and there is $\delta > 0$ (independent from $t$) such that for all $s \in S \setminus \cbc{0}$ and all $t \geq 0$ we have
\begin{align*}
    \Erw \brk{ Y_t - Y_{t+1} \mid Y_t = s } \geq \delta s,
\end{align*}
then, for all $r \geq 0$,
\begin{align*}
     \Prob{   T > \ceil*{ \frac{r + \log \bc{ s_0 / s_{\min} }   }{\delta} } } \leq e^{-r}.
\end{align*}
\end{theorem}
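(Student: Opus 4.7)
The plan is to recognise this as a multiplicative drift statement and to derive it by the now-standard combination of iterated expectation plus Markov's inequality, taking care of the stopping time via the stopped process. I would begin by rewriting the hypothesis as a multiplicative contraction: for every $s \in S \setminus \{0\}$ and every $t$,
\begin{align*}
\Erw\brk{Y_{t+1} \mid Y_t = s} \;=\; s - \Erw\brk{Y_t - Y_{t+1} \mid Y_t = s} \;\leq\; (1-\delta)\,s.
\end{align*}
Equivalently, on the event $\{Y_t > 0\}$ we have $\Erw\brk{Y_{t+1} \mid \cF_t} \leq (1-\delta)\,Y_t$, where $\cF_t$ denotes the natural filtration.

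Next, since the theorem does not assert that $0$ is absorbing, I would pass to the stopped process $Z_t := Y_{t \wedge   T}$. On $\{  T > t\}$ we have $Z_{t+1} = Y_{t+1}$ and the contraction above applies, while on $\{  T \leq t\}$ we have $Z_{t+1} = Z_t = 0$, so $\Erw\brk{Z_{t+1}\mid \cF_t} \leq (1-\delta)\,Z_t$ holds unconditionally. Iterating and taking expectations yields $\Erw\brk{Z_t} \leq (1-\delta)^t s_0$.

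From the definition of $s_{\min}$ and the fact that $Z_t \in S$, the event $\{  T > t\}$ implies $Z_t = Y_t \geq s_{\min}$, so Markov's inequality gives
\begin{align*}
\Prob{  T > t} \;\leq\; \frac{\Erw\brk{Z_t}}{s_{\min}} \;\leq\; \frac{(1-\delta)^t\,s_0}{s_{\min}} \;\leq\; \frac{s_0}{s_{\min}}\,e^{-\delta t}.
\end{align*}
Substituting $t = \ceil{(r + \log(s_0/s_{\min}))/\delta}$ makes the right-hand side at most $e^{-r}$, which is the claimed tail bound.

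I do not expect a serious obstacle here: the main thing to get right is the handling of the stopping time, because the hypothesis only supplies a drift bound on $S \setminus \{0\}$ and does not declare $0$ absorbing; the stopped-process trick above handles this cleanly. The only other subtlety is the standard inequality $(1-\delta)^t \leq e^{-\delta t}$, which is valid for all $\delta \in (0,1]$ (and for $\delta > 1$ the statement is trivial since the expected drift exceeds the current value, forcing $s_{\min} = 0$ or termination in one step). Everything else is bookkeeping.
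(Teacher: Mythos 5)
The paper does not actually prove this statement: it is imported verbatim as a special case of Theorem 18 of Lengler's drift-analysis survey, so there is no in-paper argument to compare against. Your derivation is the standard proof of the multiplicative drift tail bound --- rewrite the hypothesis as $\Erw\brk{Y_{t+1}\mid Y_t = s}\leq (1-\delta)s$, pass to the stopped process to get $\Erw\brk{Y_{t\wedge T}}\leq (1-\delta)^t s_0$, and apply Markov's inequality at threshold $s_{\min}$ on the event $\cbc{T>t}$ --- and this is essentially the argument in the cited source, so it is correct and not a different route. Two cosmetic remarks: you rightly read the paper's $T=\inf\cbc{t\geq 0\mid Y_0=0}$ as a typo for $Y_t=0$, and the tower-property step implicitly requires the drift condition to hold conditionally on the whole history rather than merely on the event $\cbc{Y_t=s}$, which is how the hypothesis is intended (and stated) in the source, so this is a quibble with the formulation rather than a gap in your proof.
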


In the third part we again show that the analysis from \cite{DBLP:journals/nc/CondonHKM20} is applicable in our setting if we do not have an initial bias.
All three parts together prove the first statement of our theorem.
The second statement follows from part one together with part two.

\paragraph{Part 1} We start with the first part.
We follow along the lines of  \cite{DBLP:journals/nc/CondonHKM20} and use \cref{lem:feller} to show the following statement.

\begin{lemma} \label{lem:condon}
Let $\Delta_t$ be the additive bias at time $t$.
With probability $1 - e^{-\Omega(\Delta_t^2/n)}$, the bias $\Delta_t$ does not drop below $\Delta_t/2$ and increases to $\min\set{2\Delta_t, n}$ within $2n$ time steps.
\end{lemma}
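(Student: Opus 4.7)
The plan is to analyse the birth--death chain on $X_s$, the count of the majority opinion, induced by $\Pn{3}$ in the sequential model. Writing $\alpha = X_s/n$, a direct case analysis of the three sampled opinions yields
\begin{align*}
p_+(\alpha) &= \alpha^{2}(1-\alpha)(3-2\alpha), \qquad
p_-(\alpha) = \alpha(1-\alpha)^{2}(1+2\alpha), \\
p_+(\alpha)-p_-(\alpha) &= \alpha(1-\alpha)(2\alpha-1).
\end{align*}
Setting $\alpha=\tfrac12+\epsilon$ and expanding gives
\begin{align*}
\frac{p_-(\alpha)}{p_+(\alpha)} = \frac{1-\epsilon-2\epsilon^{2}}{1+\epsilon-2\epsilon^{2}} \leq 1-c\epsilon
\end{align*}
for an absolute constant $c>0$ and $\epsilon \leq 1/4$. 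As long as the bias stays in $[\Delta_t/2,\,2\Delta_t]$ we have $\epsilon \geq \Delta_t/(4n)$, and hence $p_-/p_+ \leq 1 - c_1\,\Delta_t/n$.

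For the \emph{no-drop} clause I would restrict attention to the subsequence of \emph{change events}, which form a biased $\pm 1$ walk on $X$ with upward probability $p = p_+/(p_++p_-) \geq 1/2 + \Omega(\Delta_t/n)$. A drop of the bias from $\Delta_t$ down to $\Delta_t/2$ is equivalent to the excess of failures over successes (in the Feller sense) ever reaching $\Delta_t/4$, and Feller's ruin bound (\cref{lem:feller}) applied to this sub-walk gives immediately
\begin{align*}
\Prob{\exists\, s \in [t,t+2n]:\Delta_s \leq \Delta_t/2}
\leq \bc{\tfrac{q}{p}}^{\Delta_t/4}
\leq \exp \bc{-\Omega(\Delta_t^{2}/n)}.
\end{align*}

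For the \emph{doubling-time} clause I would condition on the complement of the bad event above and apply martingale concentration to $X_s$ directly. On that conditioning event the per-step drift satisfies $\Erw\brk{X_{s+1}-X_s \mid \mathcal{F}_s} \geq c_2\,\Delta_t/n$, so $\Erw\brk{X_{t+2n}-X_t} \geq 2c_2\,\Delta_t$, whereas the centered process $Y_s = X_s - X_t - \sum_{r<s}\Erw\brk{X_{r+1}-X_r \mid \mathcal{F}_r}$ has increments bounded by $2$ and Azuma--Hoeffding yields
\begin{align*}
\Prob{X_{t+2n} - X_t \leq 2c_2\Delta_t - \lambda} \leq \exp \bc{-\Omega(\lambda^{2}/n)}.
\end{align*}
Choosing $\lambda = c_2\Delta_t$ turns the right-hand side into $\exp(-\Omega(\Delta_t^{2}/n))$, and a union bound with the no-drop event then finishes the proof, since $X_{t+2n} - X_t \geq c_2\Delta_t$ translates into the bias crossing $\min\{2\Delta_t,n\}$ within the window.

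The main obstacle will be the slack between drift and target. The worst-case pointwise drift at the boundary $\alpha = 1/2 + \Delta_t/(4n)$ is only $\Theta(\Delta_t/n)$, so a too-tight constant in front of $n$ in the time window yields an expected $X$-gain barely matching $\Delta_t/2$ and leaves no room for the Azuma deviation. I would resolve this in one of two ways: (i) enlarge the window to $C\,n$ for a sufficiently large absolute constant $C$, which multiplies the number of doubling phases in \cref{thm:convergencetime} by the same constant and hence preserves the overall $O(n\log n)$ bound, or (ii) exploit that on a trajectory that has not yet reached $2\Delta_t$ the bias is typically strictly above $\Delta_t/2$, so the time-averaged drift is a strictly larger constant multiple of $\Delta_t/n$ than its worst-case pointwise value. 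The regime $\Delta_t > n/2$ (where $\min\{2\Delta_t,n\} = n$) is handled analogously, noting that for $\alpha \in [3/4,1)$ one has $p_+ - p_- = \Theta(1-\alpha)$, which keeps the drift-to-target ratio favourable all the way to consensus.
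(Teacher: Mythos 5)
Your proposal follows essentially the same route as the paper's proof: restrict attention to productive (change) steps, apply Feller's ruin bound (\cref{lem:feller}) to rule out the bias halving, and then use a concentration bound for the doubling within the window --- you invoke Azuma--Hoeffding on the centered process where the paper couples to a fixed-bias random walk and applies a Chernoff bound, an interchangeable choice. Your observation that the worst-case pointwise drift at bias $\Delta_t/2$ leaves essentially no slack against a $2n$-step window is a legitimate point that the paper glosses over, and your proposed fix of enlarging the window to $Cn$ for an absolute constant $C$ is sound, since it only changes constants in the iteration leading to \cref{thm:convergencetime}.
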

\begin{proof}
Let $X_t$ denote the number of agents with the majority opinion at time $t$ and let $Y_t = n - X_t$ denote the number of agents with the minority opinion at time $t$.
We analyze our process as a variant of gamblers' ruin and apply \cref{lem:feller}.
We only consider \emph{productive} steps in which the number of agents of a specific opinion changes.
For $X_t \in \bc{ \frac{n}{2}, \frac{n}{2} + \eps n}$ it holds that $\Prob{ X_{t+1} \neq X_t } = \Omega(1)$ and hence conditioning on productive steps only increases the constants hidden in the asymptotic notation.

In each productive step, the success probability reads $p=\Prob{X_{t+1}>X_{t} \mid X_{t+1} \neq X_t}$ and the failure probability reads $1-p=\Prob{X_{t+1} < X_{t} \mid X_{t+1} \neq X_t}$.
Let $\Delta_t = X_t - \frac{n}{2}$ denote the bias at time $t$.
We have for any $\Delta$ that
\begin{align} \label{eq:probabilities-bias}
\frac{1-p}{p}=\frac{2\bc{\frac{1}{2}-\frac{\Delta}{n}}-3\bc{\frac{1}{2}-\frac{\Delta}{n}}^3+\bc{\frac{1}{2}-\frac{\Delta}{n}}}{2\bc{\frac{1}{2}-\frac{\Delta}{n}}^4-5\bc{\frac{1}{2}-\frac{\Delta}{n}}^3+3\bc{\frac{1}{2}-\frac{\Delta}{n}}^2}< 1-16\frac{\Delta}{n}.
\end{align}
Unfortunately, the success probabilities vary over time as they depend on the bias. We proceed to bound the probabilities from below.

Let $\Delta_0$ be the bias at time $t = 0$ and let $\cR$ denote the following event: during $2n$ productive steps we always have at least half of the initial bias, i.e., $\cR = \cbc{\forall 1 \leq i \leq 2n\colon \Delta_{i} \geq  \Delta_0/2  }$.
From \cref{lem:feller} we get with $b= \Delta_0/2$ that \begin{equation} \label{eq:prob-R}
\Prob{ \cR } \geq 1 - e^{-\Omega(\Delta_0^2/n)}. \end{equation}

Similarly to \cite{DBLP:journals/nc/CondonHKM20}, we couple the productive steps of the $3$-Majority process with a biased random walk with (fixed) success probability $p>\frac{1}{2}+\frac{\Delta_0}{4n}$.
As \eqref{eq:probabilities-bias} is monotonously decreasing in $\Delta$, the number of steps required by the biased random walk to increase the bias stochastically dominates the number of steps that $3$-Majority requires.
It follows from Chernoff bounds that the random walk reaches $2\Delta_0$ within $2n$ time steps with probability $1 - e^{-\Omega(\Delta_t^2/n)}$.
Together with \eqref{eq:prob-R} the statement follows.
\end{proof}

We now use \cref{lem:condon} and show that if there is a small bias of size $\sqrt{n \log n}$ then within $O(n \log n)$ rounds there will be a bias of size $\Omega{(n)}$ \whp.
\begin{corollary}\label{cor:condon}
Assume $X_0 = \frac{n}{2} + \sqrt{n \log n}$. 
Then there is a time $t = O(n \log n)$ such that $X_t > \frac{1+\eps}{2}n$ for some constant $\eps > 0$ \whp.
Moreover, the initial majority opinion is preserved.
\end{corollary}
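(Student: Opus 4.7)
The plan is to iterate \cref{lem:condon} in phases of length $2n$, using the geometric doubling of the bias to reach a bias of $\Omega(n)$ in only $O(\log n)$ phases. Let $\Delta_0 = \sqrt{n \log n}$ denote the initial bias, and for $i \geq 0$ let $\Delta^{(i)}$ denote the bias at the beginning of phase $i$, i.e., at time $2in$.

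In phase $i$, \cref{lem:condon} guarantees the following ``good event'' $\mathcal{G}_i$ with probability at least $1 - e^{-\Omega((\Delta^{(i)})^2/n)}$: the bias stays above $\Delta^{(i)}/2$ throughout the phase, and it reaches $\min\{2\Delta^{(i)},\,n\}$ by the end of the phase. A short induction on $i$ shows that, conditional on $\mathcal{G}_0 \cap \cdots \cap \mathcal{G}_{i-1}$, we have $\Delta^{(i)} \geq \min\{2^i \sqrt{n \log n},\,n\}$. Consequently $(\Delta^{(i)})^2/n \geq 4^i \log n$, so the failure probability in phase $i$ is at most $e^{-\Omega(4^i \log n)} = n^{-\Omega(4^i)}$.

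The iteration is stopped as soon as $\Delta^{(i)} \geq \epsilon n$ for a sufficiently small constant $\epsilon > 0$, which happens after $k = O(\log n)$ phases. A union bound over these phases yields total failure probability at most $\sum_{i=0}^{k-1} n^{-\Omega(4^i)} \leq k \cdot n^{-\Omega(1)} = n^{-\Omega(1)}$, establishing the first claim with total runtime $2kn = O(n \log n)$. Preservation of the initial majority follows immediately: on the intersection of all good events, the bias never falls below $\Delta_0/2 = \frac{1}{2}\sqrt{n \log n} > 0$, so opinion $a$ remains strictly in the majority throughout the execution.

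The main obstacle to guard against is a naive union bound deteriorating across the $\Theta(\log n)$ phases. This is avoided because the failure probabilities $n^{-\Omega(4^i)}$ decrease doubly-exponentially in $i$, so their sum is dominated by the first-phase term $n^{-\Omega(1)}$, keeping the overall failure probability at the $n^{-\Omega(1)}$ level required for the \whp\ guarantee.
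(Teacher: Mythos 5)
Your proposal is correct and follows essentially the same route as the paper, which simply applies \cref{lem:condon} $O(\log n)$ times and notes that the bias (viewed as a random walk) never returns to zero; your write-up just makes the phase structure, the doubling induction, and the doubly-exponentially decaying union bound explicit. No gaps.
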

\begin{proof}
The proof follows by applying \cref{lem:condon} $O(\log n)$ times.
We remark that the initial majority opinion is preserved since the random walk modeling the bias never returns to zero.
\end{proof}

\paragraph{Part 2} We now show the second part, where we prove that the process converges within $O(n \log n)$ further steps once we have a bias of $\eps n$.
Let $Y_t$ denote the number of agents of the \emph{minority} opinion at time $t$ and assume that $Y_0 \leq \frac{n}{2} - \eps n$. In a first step, we claim that the process will not improve the minority opinion severely if only $C n \log n$ steps are conducted for some large constant~$C$.
\begin{lemma}\label{lem:drift}
Assume $Y_0 \leq \frac{n}{2} - \eps n$.
Then there is a time $t = O(n \log n)$ such that $Y_t = 0$ \whp.
Moreover, $Y_{t'} \leq \frac{1 - \eps}{2}n$ for all $t' \leq t$.
\end{lemma}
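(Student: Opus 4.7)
The plan is to apply the multiplicative drift theorem (\cref{thm:lengler}) to $Y_t$, complemented by a gambler's-ruin argument in the style of \cref{lem:condon} to ensure that $Y_t$ never leaves the regime where the drift bound is valid.

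First I compute the one-step drift. Writing $\beta = y/n$ and $\alpha = 1-\beta$, a short case analysis on the opinion of the activated agent and the composition of its $3$-sample yields
\begin{align*}
\Prob{Y_{t+1} = y-1 \mid Y_t = y} &= \alpha^{2} \beta (1+2\beta), \\
\Prob{Y_{t+1} = y+1 \mid Y_t = y} &= \alpha \beta^{2} (1+2\alpha),
\end{align*}
so that $\Erw \brk{ Y_t - Y_{t+1} \mid Y_t = y } = \alpha \beta (1-2\beta)$. As long as $\beta \leq (1-\eps)/2$, this quantity is bounded below by $(\eps/2) \beta = \eps y / (2n)$, which is precisely the multiplicative drift condition required by \cref{thm:lengler} with $\delta = \eps/(2n)$.

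Next I introduce the stopping times $\tau_{\mathrm{bad}} = \inf \cbc{ t \colon Y_t > (1-\eps) n / 2 }$ and $\tau_0 = \inf \cbc{ t \colon Y_t = 0 }$. Conditioning on \emph{productive} steps, the ratio between the upward and downward transition probabilities equals $\beta(1+2\alpha) / (\alpha(1+2\beta))$, which is at most $1 - c_\eps$ for some constant $c_\eps > 0$ uniformly over $\beta \in [0, (1-\eps)/2]$. Starting from $Y_0 \leq (1-2\eps) n / 2$, applying \cref{lem:feller} with $b = \eps n / 2$ then bounds $\Prob{ \tau_{\mathrm{bad}} < \tau_0 }$ by $(1 - c_\eps)^{\eps n / 2} = e^{-\Omega(n)}$; this already yields the second claim of the lemma.

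Finally, on the event $\cbc{ \tau_{\mathrm{bad}} \geq \tau_0 }$ the drift estimate $\delta = \eps/(2n)$ is valid along the entire trajectory until absorption at $0$. Feeding this into \cref{thm:lengler} with $s_0 \leq n/2$, $s_{\min} = 1$, and $r = \Theta(\log n)$ yields $\tau_0 = O(n \log n)$ with high probability, and a union bound over the bad event completes the argument. The main technical obstacle is that \cref{thm:lengler} requires the drift bound uniformly over the state space, so one cannot apply it directly to the raw process; the separate gambler's-ruin control of $\tau_{\mathrm{bad}}$ is precisely what allows the drift estimate to hold along the whole relevant trajectory.
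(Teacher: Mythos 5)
Your proposal is correct: the transition probabilities, the drift $\Erw\brk{Y_t - Y_{t+1}\mid Y_t = y} = \alpha\beta(1-2\beta)$, and the application of \cref{thm:lengler} with $\delta = \Theta(\eps/n)$, $s_{\min}=1$, $s_0 \leq n/2$, $r=\Theta(\log n)$ all match what the paper does and give $t = O(n\log n)$ w.h.p. The one place where you genuinely diverge is the containment claim $Y_{t'} \leq \frac{1-\eps}{2}n$: the paper proves this by coupling the increments of $Y_t$ with an unbiased random walk on $\mathbb{Z}$ and invoking the $O(\sqrt{n}\log^2 n)$ bound on its maximal displacement over the time horizon, whereas you run a gambler's-ruin argument (\cref{lem:feller}) on productive steps using the uniform bound $\beta(1+2\alpha)/(\alpha(1+2\beta)) \leq 1-c_\eps$ and the slack $b = \eps n/2$ between the hypothesis $Y_0 \leq \frac{n}{2}-\eps n$ and the threshold $\frac{1-\eps}{2}n$. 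Your route buys a stronger failure probability ($e^{-\Omega(n)}$ versus w.h.p.) and reuses machinery already deployed in \cref{lem:condon}, at the cost of needing the stopped-process reading of \cref{lem:feller} (success probability at least $p$ only while below the threshold); the paper's route is shorter to state but relies on a somewhat informally justified domination by the unbiased walk. You also replace the paper's monotonicity argument for $\delta_s$ by a direct pointwise lower bound $\alpha(1-2\beta) \geq \eps/2$ on $[0,\frac{1-\eps}{2}]$, which is simpler and equally valid. Both proofs share the same mild gap — \cref{thm:lengler} is applied to the process restricted to the good region rather than to a formally modified (absorbing/reflecting) chain — which you at least flag explicitly.
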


\begin{proof}
We start the proof by showing the following claim:

\begin{claim*}$Y_{t'} \leq \frac{1 - \eps}{2}n$ for all $t' = O( n \log n )$ \whp.\end{claim*}

\noindent This is an immediate consequence of the following coupling.
Let $R_t$ be the (unbiased) random walk on $\mathbb{Z}$.
It is a well known fact that after $T$ steps the random walk $R_t$ has distance at most $O(\log^2 n \cdot \sqrt{n})$ from the origin \whp.
By construction, $R_t \preceq Y_t$ and the claim follows.

\medskip

We now calculate $\Erw \brk{ Y_t - Y_{t+1} \mid Y_t = s }$ for $P_3$ in the sequential model. Given $Y_t = s$, let $p_s(a,b)$ be the probability to increase the minority opinion by one and let $p_s(b,a)$ be the probability to decrease the minority opinion by one. Then,
\begin{align*}
   \Erw \brk{ Y_t - Y_{t+1} \mid Y_t = s } = p_s(b,a) - p_s(a,b). 
\end{align*}
We observe
\begin{align*}
    p_s(a,b) & = \frac{n-s}{n} \Prob{ \Bin \bc{ 3, \frac{s}{n} } \geq 2 }, & & &
    p_s(b,a) & = \frac{s}{n} \Prob{ \Bin \bc{ 3, \frac{s}{n} } \leq 1},
\end{align*}
and therefore,
\begin{align*}
  \frac{p_s(b,a) - p_s(a,b)}{b} = \frac{ 2 s^2 - 3 s n + n^2}{n^{3}} .
\end{align*}
We define $\delta_s = \frac{p_s(b,a) - p_s(a,b)}{s}$ and observe 
\begin{align*}
 \delta_{s} - \delta_{s-1} =  \frac{4s - 3n - 2}{n^3} < 0 \quad \text{ if } \quad s \leq 0.75 n. 
\end{align*}
Therefore, since $s \leq \frac{1 - \eps}{2}n$ by the previous claim, $\delta_s$ is monotonously decreasing in $s$. Furthermore, 
\begin{align*}
  \delta_{\frac{1 - \eps}{2}n} = \frac{(1 + \eps/2)\eps}{4n} \quad \text{ and } \quad \delta_{1} = n^{-1} + O \bc{n^{-2}}.
\end{align*}
 Thus, we apply \cref{thm:lengler} with 
\begin{align*}
 \delta = \frac{(1 + \eps/2)\eps}{4n}, \quad s_0 = \bc{\frac{1}{2} - \eps}n, \quad r = \log n, \quad \text{and} \quad s_{\min} = 1
\end{align*}
and the statement follows.
\end{proof}

\paragraph{Part 3}
It remains to show the third part of the proof. We observe the following.
We use the same \emph{checkpoint states} $g_j$ as in \cite{DBLP:journals/nc/CondonHKM20} where $g_0 = 0$ and $g_j =  2^{j+3} \cdot \sqrt{n}$. A checkpoint state can be intuitively described as follows. We let $P_3$ run in packages of $2n$ productive update steps and monitor the majority opinion. Suppose we are in checkpoint state $g_1 = 8 \sqrt{n}$. After $2n$ productive updates, \cref{lem:condon} guarantees that with probability at least $1 - 1/(2^j + O(1))$ the majority opinion exceeds $g_2$. Now we interpret this process as a (biased) random walk on the checkpoint states $\cbc{g_j}_j$ in which every conducted step consists of $2n$ productive update steps of $3-$Majority.   
Analogously to the analysis of  \cite{DBLP:journals/nc/CondonHKM20}, it holds that
\begin{enumerate}
\item the transition between checkpoint states $g_0$ and $g_1$ has probability $\Omega(1)$, and 
\item for $j \geq 1$ the transition between checkpoint states $g_j$ and $g_j+1$ has
probability at least $1 - 1/(2^j + O(1))$.
\end{enumerate}
As in \cite{DBLP:journals/nc/CondonHKM20}, the first statement follows from a coupling with an unbiased random walk, and the second statement follows from \cref{lem:condon}.
It follows from the analysis in \cite[Section 3.2]{DBLP:journals/nc/CondonHKM20} that $3$-Majority reaches a bias of $\sqrt{n \log n}$ within $O(n \log n)$ time. This proof is based on a careful trade-off between the geometrically increasing success probability $1 - 1/(2^j + O(1))$ to get into the next checkpoint state and the number of trials that are necessary to indeed reach the next state instead of falling back.

\medskip

\begin{figure*}[t]
\centering
\def\jM#1{{\small$#1$-Maj.}\hspace{-0.25em}}

\resizebox{\columnwidth}{!}{%
\fbox{
\input{figures/key}
}
}
\resizebox{\columnwidth}{!}{%
\input{figures/plot1-gossip}
\hfill
\input{figures/plot1-population}
}
\caption{
Average convergence time of $j$-Majority without initial bias and $j = 3, \dots, 12$ normalized over $\log n$ (gossip model) or $n \log n$ (sequential model).
Each data point shows the average of $100$ independent runs.
The left plot shows the gossip model and the right plot shows the sequential model.}
\label{fig:plot-1}

\resizebox{\columnwidth}{!}{
\input{figures/plot2-gossip}
\hfill
\input{figures/plot2-population}
}
\caption{Boxplots for the normalized convergence time of $j$-Majority without initial bias. The plots show details of the distribution of the same data as in \cref{fig:plot-1} for $n = 10^6$.}
\label{fig:plot-2}

\end{figure*}

With all three parts, we are now ready to put everything together and prove \cref{thm:convergencetime}.

\begin{proof}[Proof of \cref{thm:convergencetime}]
Assume there is no bias.
From the analysis in \cite{DBLP:journals/nc/CondonHKM20} we obtain (see above) that we reach a bias of size $\sqrt{n \log n}$ within $O(n \log n)$ time \whp.
From \cref{cor:condon} we obtain that within further $O(n \log n)$ time the bias is amplified to $\epsilon n$ for some constant $\epsilon > 0$ \whp.
Finally, from the drift analysis in \cref{lem:drift} we get that we converge in further $O(n \log n)$ time once we have a constant-factor bias \whp.
Together, this shows the first part of the theorem.

The second part of the theorem follows from \cref{lem:condon} and \cref{lem:drift}, where we observe that the initial majority opinion is preserved \whp. This concludes the proof.
\end{proof}

\section{Empirical Analysis}
\def\CC{C\nolinebreak[4]\hspace{-.05em}\raisebox{.4ex}{\relsize{-2}{\textbf{++}}}}

In this section we present simulation results to support our theoretical findings.
Our simulation software is implemented in the \CC{} programming language.
As a source of randomness it uses the Mersenne Twister \texttt{mt19937\_64} provided by the \CC{11} \texttt{\textless random\textgreater{}} library.
Our simulations have been carried out on two machines with two Intel(R) Xeon(R) E5-2630 v4 CPUs and 128 GiB of memory each running the Linux 5.13 kernel.
The simulation software and all required tools to reproduce our plots will be made publicly available upon publication of this paper.

In \cref{fig:plot-1} we plot the required number of rounds until $j$-Majority converges when each opinion is initially supported $n/2$ agents.
The data show the average convergence time over $100$ independent simulation runs for $j = 3, \dots, 12$. The number of agents $n$ is shown on the $x$-axis, and the normalized convergence time is shown on the $y$-axis. The left plot shows the data for the gossip model, where the normalization means that the required number of rounds is divided by $\log n$. The right plot shows the data for the sequential model, where the normalization means that the required number of interactions is divided by $n \log n$.

Our empirical data confirm our theoretical findings.
In particular, we observe that the processes exhibit a running time of $\Theta(\log n)$ rounds (gossip model) or $\Theta(n \log n)$ interactions (sequential model) for the values of $j$ we consider.
Furthermore, we clearly see that $\Ex{  T_{2j+2}(X_{0}) } = \Ex{  T_{2j+1}(X_{0}) }$ (i.e., 3-Majority converges as quickly as 4-Majority, 5-Majority converges as quickly as 6-Majority, and so on) and $\Ex{  T_{2j+1}(X_{0}) } \leq \Ex{  T_{2j}(X_{0}) }$ (i.e., 5-Majority is faster than 4-Majority, 7-Majority is faster than 6-Majority, and so on). This empirically confirms our results from \cref{thm:main-result} for both models, and it shows that the known results from the gossip model for $3$-Majority \cite{DBLP:conf/podc/GhaffariL18} carry over to the sequential model as predicted in \cref{thm:convergencetime}.

In the left plot in \cref{fig:plot-1} for the gossip model we additionally observe that the required number of rounds to reach consensus is slightly larger for smaller values of $n$.
This appears to be a consequence of the discrete rounds in the synchronous model: the observed deviation scales as $O(1/\log n)$, which is of the same size as the rounding error that arises when reporting the running time in discrete rounds of $n$ interactions each.

Finally, in \cref{fig:plot-2} we show additional detail for the distribution of the convergence times of the $j$-Majority processes with $n = 10^6$ and $j = 3, \dots, 12$.
Our boxplots show that the running times are strongly concentrated around the mean, and the constants hidden in the asymptotic analysis are small: the running time is less than $3 \log n$ rounds in the gossip model and less than $3 n \log n$ interactions in the sequential model.
The small constants hint at the practical applicability of the simple $3$-Majority process.

\section{Conclusions and Open Problems}

We analyze the family of $j$-Majority processes in two communication models with parallel and sequential activations.
In both models our results affirmatively answer an open question from \cite{DBLP:conf/podc/BerenbrinkCEKMN17} for the case of two opinions and prove the existence of a \emph{hierarchy}: our results show the stochastic dominance of the convergence time of the $j+1$-Majority process over the $j$-Majority process.
For $3$-Majority in the sequential model we show an asymptotically optimal bound of $O(n \log n)$ sequential activations. This matches the well-known bounds for the corresponding process in the gossip model.

An open question is whether a similar hierarchy exists for \emph{lazy} processes where agents keep their previous opinion if there is a tie among the sampled opinions.
A coupling between $3$-Majority and the (lazy) \protocol{TwoChoices} process was analyzed in \cite{DBLP:conf/podc/BerenbrinkCEKMN17}.
However, their general framework cannot be adapted to lazy processes for larger value of $j$: their analysis requires so-called \emph{AC-Processes} in which the next state of an agent depends only on the global opinion distribution  but not on the agent's current state. This is obviously not the case for lazy processes.
Note that our analysis also cannot be applied to lazy processes directly: \cref{lem:differentProbabilities,lem:differentProbabilities2} do not hold for lazy processes.

Another interesting open question considers the \emph{communication complexity} of a protocol instead which counts the number of interactions. Note that in $j$-Majority each activated agent interacts with $j$ agents. 
It would be interesting to rigorously analyze the trade-off between the convergence time and the communication complexity.

Finally, the most interesting open question is whether similar results can be shown for more than two opinions.
Unfortunately, our majoritzation-based approach does not generalize to $k > 2$.
The main reason is that natural monotonicity properties do not hold: the probability to increase the majority opinion does not only depend on the size of the majority opinion itself but instead on the entire opinion distribution.
This aligns well with a conjecture from \cite{DBLP:conf/podc/BerenbrinkCEKMN17} that states that counterexamples exist for any majorization attempt that uses a total order on opinion state vectors.
We believe that in order to show a hierarchy of majority protocols for more than two opinions different techniques will be needed.

\clearpage
\section*{References}
\printbibliography[heading=none]

\clearpage

\appendix

\section{Appendix} \label{apx:appendix}
\allowdisplaybreaks
\subsection{Proof of \texorpdfstring{\cref{Lem_dominance}}{Lemma 5}}\label{proof_Lem_dominance}
\LemDominance*
\begin{proof}
If $s \geq s'+2$, this is immediate. Indeed, we find 
\begin{align*}
    \Prob{ X_{t+1} \geq s-1 \mid X_{t} = s } = 1
    \text{ \ and \ } \Prob{ X_{t+1} \leq s-1 \mid X_{t} = s' } = 1
\end{align*} 
in this case. If, on the other hand, $s = s' + 1$, it is a sufficient condition that
\begin{align*}
    \MoveEqLeft \Prob{ X_{t+1} = s + 1 \mid X_{t} = s } + \Prob{ X_{t+1} = s \mid X_{t} = s } \geq \Prob{ X_{t+1} = s \mid X_{t} = s-1 } 
\end{align*}
or, equivalently,
\begin{align*}
    1 - \Prob{ X_{t+1} = s - 1 \mid X_{t} = s } \geq \Prob{ X_{t+1} = s \mid X_{t} = s-1 }. 
\end{align*}
We start with $P_{2j + 1}$. Here, we find
\begin{align*}
    1 - \Prob{ X_{t+1} = s - 1 \mid X_{t} = s } = 1 - \frac{s}{n} \Prob{ \Bin \bc{ 2j+1, \frac{s}{n} } \leq j }
\end{align*}
and
\begin{align*}
    \MoveEqLeft \Prob{ X_{t+1} = s \mid X_{t} = s-1 } \\
    &= \bc{1 - \frac{s-1}{n}}  \Prob{ \Bin \bc{ 2j+1, \frac{s-1}{n} } \geq j+1 } \\
    & = \bc{1 - \frac{s-1}{n}} \bc{ 1 - \Prob{ \Bin \bc{ 2j+1, \frac{s-1}{n} } \leq j }} \\
    & = 1 - \frac{s-1}{n} - \Prob{ \Bin \bc{ 2j+1, \frac{s-1}{n} } \leq j } + \frac{s-1}{n} \Prob{ \Bin \bc{ 2j+1, \frac{s-1}{n} } \leq j }.
\end{align*}
It therefore suffices to prove that
\begin{align} \label{eq_to_prove_2j}
    \MoveEqLeft s \Prob{ \Bin \bc{ 2j+1, \frac{s}{n} } \leq j } \\
    & \leq (s-1) + n \Prob{ \Bin \bc{ 2j+1, \frac{s-1}{n} } \leq j }  - (s-1) \Prob{ \Bin \bc{ 2j+1, \frac{s-1}{n} } \leq j } \notag
\end{align}
We have
\begin{align*}
 s \Prob{ \Bin \bc{ 2j+1, \frac{s}{n} } \leq j } \leq n \Prob{ \Bin \bc{ 2j+1, \frac{s-1}{n} } \leq j }
\end{align*}
since $s \leq n$ and because of the monotonicity of the Binomial distribution with respect to the success rate.
Furthermore,
\begin{align*}
 \MoveEqLeft (s-1) + n \Prob{ \Bin \bc{ 2j+1, \frac{s-1}{n} } \leq j } - (s-1) \Prob{ \Bin \bc{ 2j+1, \frac{s-1}{n} } \leq j } \\
 & \geq n \Prob{ \Bin \bc{ 2j+1, \frac{s-1}{n} } \leq j }.
\end{align*}
This verifies \eqref{eq_to_prove_2j}.
The calculus for $P_{2j}$ is similar. Here, we find
\begin{align*}
    \MoveEqLeft 1 - \Prob{ X_{t+1} = s - 1 \mid X_{t} = s } \\
    &= 1 - \frac{s}{n} \bc{ \Prob{ \Bin \bc{ 2j, \frac{s}{n} } \leq j-1 } + \frac{1}{2} \Prob{ \Bin \bc{ 2j, \frac{s}{n} } = j }}.
\end{align*}
Moreover,
\begin{align*}
    \MoveEqLeft \Prob{ X_{t+1} = s \mid X_{t} = s-1 } \\
    &= \bc{1 - \frac{s-1}{n}} \Bigg(\Prob{ \Bin \bc{ 2j, \frac{s-1}{n} } \geq j+1 }  + \frac{1}{2} \Prob{ \Bin \bc{ 2j, \frac{s-1}{n} } = j } \Bigg) \\
    &= \bc{1 - \frac{s-1}{n}} \Bigg( 1 - \Prob{ \Bin \bc{ 2j, \frac{s-1}{n} } \leq j-1 } - \frac{1}{2} \Prob{ \Bin \bc{ 2j, \frac{s-1}{n} } = j } \Bigg). 
\end{align*}
Therefore, it suffices to prove
\begin{align}
    \label{eq_prove_maj_2j+1}
    \MoveEqLeft s \bc{ \Prob{ \Bin \bc{ 2j, \frac{s}{n} } \leq j-1 } + \frac{1}{2} \Prob{ \Bin \bc{ 2j, \frac{s}{n} } } } \\
    \notag & \leq n \bc{ \Prob{ \Bin \bc{ 2j, \frac{s-1}{n} } \leq j-1 } + \frac{1}{2} \Prob{ \Bin \bc{ 2j, \frac{s-1}{n} } = j } } \\
    \notag & \phantom{{}\leq{}} + (s-1) \Big( 1 - \Prob{ \Bin \bc{ 2j, \frac{s-1}{n} } \leq j-1 } - \frac{1}{2} \Prob{ \Bin \bc{ 2j, \frac{s-1}{n} } = j } \Big).
\end{align}
Clearly,
\begin{align*}
    1 - \Prob{ \Bin \bc{ 2j, \frac{s-1}{n} } \leq j-1 } - \frac{1}{2} \Prob{ \Bin \bc{ 2j, \frac{s-1}{n} } = j }  > 0
\end{align*}
by the definition of the Binomial distribution, and as $s \leq n$ we find by the monotonicity of the Binomial distribution
\begin{align*}
    \MoveEqLeft s \bc{ \Prob{ \Bin \bc{ 2j, \frac{s}{n} } \leq j-1 } + \frac{1}{2} \Prob{ \Bin \bc{ 2j, \frac{s}{n} } } } \\
    &\leq n \bc{ \Prob{ \Bin \bc{ 2j, \frac{s-1}{n} } \leq j-1 } + \frac{1}{2} \Prob{ \Bin \bc{ 2j, \frac{s-1}{n} } = j } }.
\end{align*}
Therefore, \eqref{eq_prove_maj_2j+1} follows.
\end{proof}

\subsection{Proof of \texorpdfstring{\cref{lem:diffPoss2jvs2jplus1}}{Lemma 7}}\label{proof_lem_diffPoss2jvs2jplus1}
\diffPossTwojvsTwojplusOne*

\begin{proof}
To compare the processes $\Pn{2j+1}$ and $\Pn{2j}$ in the other cases, we couple the processes such that the first $2j-1$ draws are the same in both processes and for the process $\Pn{2j}$ we draw one more agent in a second step. 
Depending on how many agents with majority opinion have been drawn in the first step, the results of the two processes might differ. 
We summarized the cases that can occur in a decision tree, see \cref{fig:2jvs2jplus1}.
The cases where the two process have a different outcome are highlighted in red and green.

If we draw $j$ agents with majority opinion in the first step, decide to keep the non-majority opinion in the first process, and draw one more majority agent in the last step, 
$x_t^{(2j+1)}$ changes to $a$ in process $\Pl{2j+1}$ but not in process $\Pl{2j}$.
The probability that this case occur is given by 
\[\binom{2j}{j} \alpha^j (1-\alpha)^{j} \cdot \frac{1}{2} \cdot \alpha = \binom{2j}{j} \frac{\alpha^{j+1} (1-\alpha)^{j}}{2}.\]
On the other hand, if we draw $j$ agents with majority opinion, decide to change to the majority opinion in $\Pl{2j}$, and draw one more agent with non-majority opinion in the second step, $x_t^{(2j+1)}$ changes to $a$ for process $\Pl{2j}$ but not for process $\Pl{2j+1}$.
The probability that this case occur is given by 
\[\binom{2j}{j} \alpha^{j} (1-\alpha)^{j}\cdot \frac{1}{2} \cdot (1-\alpha) = \binom{2j}{j} \frac{\alpha^{j} (1-\alpha)^{j+1}}{2}.\]
As a consequence, it holds that
\begin{align*}
\MoveEqLeft \Prob{x_{t+1}^{(2j+1)} = a | x_t^{(2j+1)} =b,   X^{(2j+1)}_{t}= s}\\  
& = \Prob{x_{t+1}^{(2j)} = a | x_t^{(2j)} =b,  X^{(2j)}_{t}= s} +\binom{2j}{j} \frac{\alpha^{j+1} (1-\alpha)^{j}}{2} 
        -\binom{2j}{j} \frac{\alpha^{j} (1-\alpha)^{j+1}}{2}\\
        & = \Prob{x_{t+1}^{(2j)} = a | x_t^{(2j)} =b,  X^{(2j)}_{t}= s} + \frac{ (2\alpha -1)}{2}\binom{2j}{j} \alpha^{j} (1-\alpha)^{j}.
\end{align*}
Note that since $\alpha \geq \frac{1}{2}$ because $s\geq n/2$ it holds that $(2\alpha -1) \geq 0$.

Similarly, when comparing the probabilities for an agent with majority opinion $a$ to change to the non-majority opinion  $b$, we deduce from the tree the following:
The probability that $x_t^{(2j+1)}$ changes to $b$ in $\Pl{2j+1}$ but not for process $\Pl{2j}$ is given by 
\[\binom{2j}{j} \alpha^{j} (1-\alpha)^{j}\cdot \frac{1}{2} \cdot (1-\alpha) = \binom{2j}{j} \frac{\alpha^{j} (1-\alpha)^{j+1}}{2} .\]
On the other hand, the probability that $x_t^{(2j+1)}$ stays  $a$ in $\Pl{2j+1}$ but not in process $\Pl{2j}$ is given by 
\[\binom{2j}{j} \alpha^j (1-\alpha)^{j} \cdot \frac{1}{2} \cdot \alpha = \binom{2j}{j} \frac{\alpha^{j+1} (1-\alpha)^{j}}{2}.\]
As a consequence, it holds that 
\begin{align*}
\MoveEqLeft \Prob{x_{t+1}^{(2j+1)} \rightarrow b | x_t^{(2j+1)} =a,   X^{(2j+1)}_{t}= s} \\ 
= &  \Prob{x_{t+1}^{(2j)} \rightarrow b | x_t^{(2j)} =a,  X^{(2j)}_{t}= s} - \binom{2j}{j} \frac{\alpha^{j} (1-\alpha)^{j+1}}{2} + \binom{2j}{j} \frac{\alpha^{j+1} (1-\alpha)^{j}}{2}\\
        = & \Prob{x_{t+1}^{(2j)} \rightarrow b | x_t^{(2j)} =a,  X^{(2j)}_{t}= s} - \frac{(2\alpha -1)}{2}\binom{2j}{j} \alpha^{j} (1-\alpha)^{j}. \tag*{\qedhere}
\end{align*}

\end{proof}

\subsection{Proof of \texorpdfstring{\cref{lem:differentProbabilities}}{Lemma 8}}\label{proof_lem_differentProbabilities}
\differentProbabilities*
\begin{proof}
Note that for $d = s-1$ it holds that
\begin{align*}
\Prob{X^{(2j+1)}_{t+1} \geq d | X^{(2j+1)}_{t} = s} = 1 =  \Prob{X^{(2j)}_{t+1}\geq d | X^{(2j)}_{t}= s}
\end{align*}
as well as for $d=s+2$ it holds that
\begin{align*}
\Prob{X^{(2j+1)}_{t+1} \geq d | X^{(2j+1)}_{t} = s} = 0 =  \Prob{X^{(2j)}_{t+1}\geq d | X^{(2j)}_{t}= s}.
\end{align*}

Let us first compare the probabilities that the processes increase the number of agents with majority opinion in the next step. 
Let $\alpha = \frac{s}{n}$.
For the majority to increase the agent which is updated has to be an agent with non-majority opinion. 
The probability to draw an agent with non-majority opinion is given by $(1-\alpha)$.
As a consequence, by \cref{lem:diffPoss2jvs2jplus1} it holds that
\begin{align*}
    \MoveEqLeft \Prob{X^{(2j+1)}_{t+1} = s+1 | X^{(2j+1)}_{t} = s} \\
    &= (1-\alpha)\Prob{x_{t+1}^{(2j+1)} = a | x_t^{(2j+1)} =b,   X^{(2j+1)}_{t}= s}\\
    &= \Prob{X^{(2j)}_{t+1} = s+1 | X^{(2j)}_{t} = s} + \frac{(2\alpha -1)}{2}\binom{2j}{j} \alpha^{j} (1-\alpha)^{j+1}.
\end{align*}
As a consequence, for $d = s+1$ it holds that
\begin{align*}
\Prob{X^{(2j+1)}_{t+1} \geq d | X^{(2j+1)}_{t} = s} \geq \Prob{X^{(2j)}_{t+1} \geq d | X^{(2j)}_{t} = s} .
\end{align*}
Similarly, the number of majority agents only can decrease if the updated agent has the majority opinion. 
This happens with probability $\alpha$ and hence
\begin{align*}
    \MoveEqLeft \Prob{X^{(2j+1)}_{t+1} = s-1 | X^{(2j+1)}_{t} = s} \\
    &= \alpha\Prob{x_{t+1}^{(2j+1)} = b | x_t^{(2j+1)} =a,   X^{(2j+1)}_{t}= s}\\
    &= \Prob{X^{(2j)}_{t+1} = s-1 | X^{(2j)}_{t} = s} - \frac{(2\alpha -1)}{2}\binom{2j}{j} \alpha^{j+1} (1-\alpha)^{j}.
\end{align*}

Since for any process it holds that
\begin{align*}
\Prob{X^{(k)}_{t+1} \geq s | X^{(k)}_{t} = s} = 1- \Prob{X^{(k)}_{t+1} = s-1 | X^{(k)}_{t} = s}
\end{align*}
and it holds that
\begin{align*}
\Prob{X^{(2j+1)}_{t+1} = s-1 | X^{(2j+1)}_{t} = s}
\leq \Prob{X^{(2j)}_{t+1} = s-1 | X^{(2j)}_{t} = s}
\end{align*}
we get for $d = s$ that
\begin{align*}
{\Prob{X^{(2j+1)}_{t+1} \geq d | X^{(2j+1)}_{t} = s} \geq \Prob{X^{(2j)}_{t+1} \geq d | X^{(2j)}_{t} = s}.} \tag*{\qedhere}
\end{align*}
\end{proof}

\subsection{Proof of \texorpdfstring{\cref{lem:diffPoss2jminus1vs2j}}{Lemma 9}}
\label{proof_lem_diffPoss2jminus1vs2j}
\diffPossTwojminusOnevsTwoj*
\begin{proof}
To compare the processes $\Pn{2j-1}$ and $\Pn{2j}$ in the other cases, we couple the processes such that the first $2j-1$ draws are the same in both processes and for the process $\Pn{2j}$ we draw one more agent in a second step. 
Depending on how many agents with majority opinion have been drawn in the first step, the results of the two processes might differ. 
We summarized the cases that can occur in a decision tree, see \cref{fig:2jvs2jplus1}.
The cases where the two process have a different outcome are highlighted in red and green.

If we draw $j-1$ agents with majority opinion in the first step, draw one more majority agent in the second step, and decide with probability $1/2$ to change the opinion of the agent,
$o_t^{(k)}(x)$ changes to $a$ in process $\Pl{2j}$ but not in process $\Pl{2j-1}$.
The probability that this case occur is given by 
\[\binom{2j-1}{j-1} \alpha^{j-1} (1-\alpha)^{j} \cdot \alpha \cdot \frac{1}{2} = \binom{2j-1}{j} \frac{\alpha^{j} (1-\alpha)^{j}}{2} .\]

On the other hand, if we draw $j$ agents with majority opinion, draw one more agent with non-majority opinion in the second step, and decide with  probability $1/2$ to change the opinion of the agent, 
$o_t^{(k)}(x)$ changes to $a$ for $k={2j-1}$ but not for $k={2j}$.
The probability that this case occur is given by 
\[\binom{2j-1}{j} \alpha^{j} (1-\alpha)^{j-1} \cdot (1-\alpha) \cdot \frac{1}{2} = \binom{2j-1}{j} \frac{\alpha^{j} (1-\alpha)^{j}}{2}.\]
As a consequence, it holds that
\begin{align*}
\MoveEqLeft \Prob{x_{t+1}^{(2j)} = a | x_t^{(2j)} =b,   X^{(2j)}_{t}= s} \\
&= \Prob{x_{t+1}^{(2j-1)} = a | x_t^{(2j-1)} =b,  X^{(2j-1)}_{t}= s}\\
        & \phantom{{}={}} +\binom{2j-1}{j} \frac{\alpha^{j} (1-\alpha)^{j}}{2} -\binom{2j-1}{j} \frac{\alpha^{j} (1-\alpha)^{j}}{2}\\
        & =  \Prob{x_{t+1}^{(2j-1)} = a | x_t^{(2j-1)} =b,  X^{(2j-1)}_{t}= s} .
\end{align*}
Note that since $\alpha \geq 0.5$ because $s\geq n/2$ it holds that $(2\alpha -1) \geq 0$.

Similarly, when comparing the probabilities for an agent with majority opinion $a$ to change to the non-majority opinion  $b$, we deduce from the tree the following:
The probability that $o_t^{(k)}(x)$ changes to $b$ for $k=2j$ but not for $k=2j-1$ is given by
\[\binom{2j-1}{j} \alpha^{j} (1-\alpha)^{j-1} \cdot (1-\alpha) \cdot \frac{1}{2}= \binom{2j-1}{j} \frac{\alpha^{j} (1-\alpha)^{j}}{2}.\]
On the other hand, the probability that $o_t^{(k)}(x)$ changes to $b$ for $k=2j-1$ but not for $k=2j$ is given by 
\[\binom{2j-1}{j-1} \alpha^{j-1} (1-\alpha)^{j} \cdot \alpha \cdot \frac{1}{2}= \binom{2j-1}{j} \frac{\alpha^{j} (1-\alpha)^{j}}{2}.\]
As a consequence, it holds that 
\begin{align*}
\MoveEqLeft \Prob{x_{t+1}^{(2j)} \rightarrow b | x_t^{(2j)} =a,   X^{(2j)}_{t}= s}  \\
&=  \Prob{x_{t+1}^{(2j-1)} \rightarrow b | x_t^{(2j-1)} =a,  X^{(2j-1)}_{t}= s}\\
& \phantom{{}={}} + \binom{2j-1}{j} \frac{\alpha^{j} (1-\alpha)^{j}}{2} - \binom{2j-1}{j} \frac{\alpha^{j} (1-\alpha)^{j}}{2}\\
        &= \Prob{x_{t+1}^{(2j-1)} \rightarrow b | x_t^{(2j-1)} =a,  X^{(2j-1)}_{t}= s}.\tag*{\qedhere}
\end{align*}
\end{proof}

\subsection{Proof of \texorpdfstring{\cref{lem:differentProbabilities2}}{Lemma 10}}
\differentProbabilitiesTwo*
\begin{proof}
To compare the processes $\Pn{2j-1}$ and $\Pn{2j}$ in the other cases, we couple the processes such that the first $2j-1$ draws are the same in both processes and for the process $\Pn{2j}$ we draw one more agent in a second step. 
Depending on how many agents with majority opinion have been drawn in the first step, the results of the two processes might differ. 
We summarized the cases that can occur in a decision tree, see \cref{fig:2jplus1vs2jplus2}.
The cases where the two process have a different outcome are highlighted in red and green.

Let us first compare the probabilities that the processes increase the number of agents with majority opinion in the next step. 
Let $\alpha = \frac{s}{n}$.
For the majority to increase the agent which is updated has to be an agent with non-majority opinion. 
The probability to draw an agent with non-majority opinion is given by $(1-\alpha)$.
As a consequence, by \cref{lem:diffPoss2jminus1vs2j} it holds that
\begin{align*}
    \MoveEqLeft \Prob{X^{(2j)}_{t+1} = s+1 | X^{(2j)}_{t} = s} \\
    &= (1-\alpha)\Prob{x_{t+1}^{(2j)} = a | x_t^{(2j)} =b,   X^{(2j+1)}_{t}= s}\\
    &= (1-\alpha)\left(\Prob{x_{t+1}^{(2j-1)} = a | x_t^{(2j-1)} =b,  X^{(2j-1)}_{t}= s}\right)\\
    &= \Prob{X^{(2j-1)}_{t+1} = s+1 | X^{(2j-1)}_{t} = s} + \frac{(2\alpha -1)}{2}\binom{2j}{j} \alpha^{j} (1-\alpha)^{j+1}.
\end{align*}
Similarly, the number of majority agents only can decrease if the updated agent has the majority opinion. 
This happens with probability $\alpha$ and hence, by
\cref{lem:diffPoss2jminus1vs2j}, we get
\begin{align*}
    \MoveEqLeft \Prob{X^{(2j)}_{t+1} = s-1 | X^{(2j)}_{t} = s} \\
    &= \alpha\Prob{x_{t+1}^{(2j)} = b | x_t^{(2j)} =a,   X^{(2j)}_{t}= s}\\
    &= \alpha\Prob{x_{t+1}^{(2j-1)} = b | x_t^{(2j-1)} =a,  X^{(2j-1)}_{t}= s}\\
    &= \Prob{X^{(2j-1)}_{t+1} = s-1 | X^{(2j-1)}_{t} = s}
\end{align*}     
Therefore we have for any $d \in [n]$ that
\begin{align*}
\Prob{X^{(2j)}_{t+1} \geq d | X^{(2j)}_{t} = s}    & = \Prob{X^{(2j-1)}_{t+1} \geq d | X^{(2j-1)}_{t} = s}.\tag*{\qedhere}
\end{align*}
\end{proof}

\end{document}